\documentclass[a4paper,UKenglish]{lipics-v2019}

\usepackage{microtype}

\usepackage[toc,page]{appendix} 

\usepackage{booktabs} 

\usepackage[utf8]{inputenc}

\usepackage{subcaption}
\usepackage{float}

\usepackage{tikz}
\usetikzlibrary{arrows,shapes,trees, arrows.meta}

\usepackage[gen]{eurosym} 

\makeatletter
\@fleqnfalse
\makeatother

\hideLIPIcs
\nolinenumbers

\bibliographystyle{plainurl}

\title{Network-Aware Strategies in Financial Systems}

\author{Pál András Papp}{ETH Zürich, Switzerland}{apapp@ethz.ch}{}{}
\author{Roger Wattenhofer}{ETH Zürich, Switzerland}{wattenhofer@ethz.ch}{}{}

\authorrunning{P.A. Papp and R. Wattenhofer}

\Copyright{Pál András Papp and Roger Wattenhofer}

\ccsdesc[500]{Theory of computation~Network games}
\ccsdesc[500]{Applied computing~Economics}
\ccsdesc[100]{Theory of computation~Algorithmic mechanism design}

\keywords{Financial network, credit default swap, creditor priority, clearing problem, prisoner's dilemma, dollar auction}

\begin{document}

\maketitle

\begin{abstract}
We study the incentives of banks in a financial network, where the network consists of debt contracts and credit default swaps (CDSs) between banks. One of the most important questions in such a system is the problem of deciding which of the banks are in default, and how much of their liabilities these banks can pay. We study the payoff and preferences of the banks in the different solutions to this problem. We also introduce a more refined model which allows assigning priorities to payment obligations; this provides a more expressive and realistic model of real-life financial systems, while it always ensures the existence of a solution.

The main focus of the paper is an analysis of the actions that a single bank can execute in a financial system in order to influence the outcome to its advantage. We show that removing an incoming debt, or donating funds to another bank can result in a single new solution that is strictly more favorable to the acting bank. We also show that increasing the bank's external funds or modifying the priorities of outgoing payments cannot introduce a more favorable new solution into the system, but may allow the bank to remove some unfavorable solutions, or to increase its recovery rate. Finally, we show how the actions of two banks in a simple financial system can result in classical game theoretic situations like the prisoner's dilemma or the dollar auction, demonstrating the wide expressive capability of the financial system model.
\end{abstract}

\newpage
\setcounter{page}{1}

\section{Introduction}

The world's financial system is a complex network where financial institutions such as banks are connected via various kinds of financial contracts. If some financial institutions go bankrupt, then others might suffer as well; the financial network might experience a ripple effect. Two of the most common financial contracts are (i) debt contracts (some bank owes a specific amount of money to another bank) and (ii) Credit Default Swaps (CDSs). A CDS is a simple financial derivative where the payment obligation depends on the defaulting of another bank in the system. The combination of debt contracts and CDSs turns out to provide a simple and yet expressive model, which is able to capture a wide range of interesting phenomena in real-life financial markets.

Given a set of banks and a set of payment obligations between these banks, one of the most natural questions is to decide which of the banks can fulfill these obligations, and which of them cannot, and hence are in default. The problem of deciding what portion of obligations banks can fulfill is known as the \textit{clearing problem}. One can easily encounter a situation when this problem has multiple different solutions in a financial system. It is natural to study how much the individual banks prefer these solutions, i.e. what is their payoff in specific solutions of the system.

In this paper we study the problem from the point of view of a single bank $v$. We analyze whether some simple actions of $v$ can improve its situation in the network. In a financial system, the complex interconnection between the banks can easily result in situations where banks can achieve a better outcome in surprising and somewhat counterintuitive ways. For example, being on the receiving end of a debt contract is generally considered beneficial, because the bank obtains payment from this contract. However, in a system with debts and CDSs, it is also possible that if a bank $v$ nullifies a debt contract as a creditor, then (through a number of intermediate steps in the network) this results in an even higher total payoff for $v$. Such phenomena are crucial to understand, since if banks indeed execute these actions to obtain a better outcome, then these opportunities will determine how the financial system changes and evolves in the future.

We begin with a description of the financial system model recently developed by Schuldenzucker \textit{et. al.} \cite{base1}, which serves as the base model for our findings. We then introduce a more refined version of this model which also assigns \textit{priorities} to each contract, and assumes that banks have to fulfill their payment obligations in the order defined by these priorities. We show that besides being more expressive and realistic, this augmented model still ensures the existence of a solution.

Our main contribution is an analysis of various different actions that banks in the system can execute in order to increase their final payoff when the system is cleared. We first show that by removing an incoming debt (partially or entirely) or by donating extra funds to another bank, a bank might be able to increase its payoff. We then show that investing more external assets or reprioritizing its outgoing payments can also allow a bank to influence the system. However, these actions do not allow a bank to introduce more favorable new solutions, but they can allow the bank to remove unfavorable solutions from the system, or increase its own recovery rate.

Finally, we present some simple examples where two banks try to influence the financial system simultaneously, resulting in situations that are identical to the classical prisoner's dilemma or dollar auction game. This suggests that financial systems in this model can exhibit very rich behavior, and if two or more banks execute these actions simultaneously, this can easily lead to complex game-theoretic settings.

\section{Related Work}

Numerous studies on the properties of financial systems are directly or indirectly based on the financial system model introduced by Eisenberg and Noe in \cite{model1}. This model only assumes simple debt contracts between banks. Different studies have later also extended this model with default costs \cite{model2}, cross-ownership relations \cite{cross1, cross2} or so-called covered CDSs \cite{coveredCDS}. The related literature has studied the propagation of shocks in many different variants of these models \cite{prop1, prop2, prop3, prop4, prop5, prop6}.

One disadvantage of these models is that they can only describe \textit{long positions} of banks on each other, meaning that a worse situation for one bank is always worse (or the same) for any other bank. For example, if a bank is unable to pay its debt, then its creditor receives less money, and it might not be able to pay its debts either. This already enables the model to capture many interesting phenomena, e.g. how a small shock causes a ripple effect in the network. However, long connections imply that there is a solution in these systems which is simultaneously the best for all banks. As such, the models cannot represent the opposing interests of banks in many real-world situations, and thus these models are not so interesting from a game-theoretic point of view.

On the other hand, a more realistic model was recently introduced by Schuldenzucker, Seuken and Battiston \cite{base1}; we assume this model of financial systems in our paper. Besides debt contracts, this new model also allows credit default swaps between banks, which are essentially financial derivatives where banks are betting on the default of another bank. CDSs are a prominent kind of derivative that played a significant role in the 2008 financial crisis \cite{CDS3}; as such, they have been studied in various works in the financial literature \cite{CDS1, CDS2, CDS4}. While the model still remains relatively simple with these two kind of contracts, it now also allows us to model \textit{short positions}, when it is more favorable for a bank if another bank is worse off. This increases the expressive power of the model dramatically, allowing us to capture a wide range of properties of practical financial systems.

The work of Schuldenzucker \textit{et. al.} analyzes their model from a complexity-theoretic perspective. The authors show that in the base variant of this model, each system has at least one solution; however, if we also assume so-called default costs, then some systems might not have a solution at all. In case of default costs, they also describe sufficient conditions for the existence of a solution. Their follow-up work shows that it is computationally hard to decide if a solution exists, and also to find or approximate a solution of the system \cite{base2}.

However, to our knowledge, the model has not been analyzed from a game-theoretic perspective before. Our paper aims to lay the foundations of such an analysis, by evaluating a variety of simple (and yet realistic) actions that allow nodes to influence the network due to the presence of short positions. Since banks often have conflicting interests in these systems, these actions can easily lead to interesting game-theoretical dilemmas.

The only similar game-theoretic analysis we are aware of is the recent work of Bertschinger \textit{et. al.} \cite{gametheo}, set in the original model of Eisenberg and Noe. Instead of having institutional rules for payment obligations in case of default, \cite{gametheo} assumes that banks can freely select the order of paying their outgoing debts, or even decide to make partial payments in some contracts. The paper discusses the properties of Nash-Equilibria and Social Optima in this setting. While this has a connection to our observations in Section \ref{sec:reprio}, we analyze the results of such actions in a significantly more complex model with CDSs.

In general, measuring the sensitivity or complexity of a financial network has also been exhaustively studied \cite{graph2, graph3, prop3, prop4}. The topic also has a major importance for financial authorities in practice, who regularly conduct stress tests to analyze real-world financial systems. The clearing problem, in particular, also plays an important role in the European Central Bank's stress test framework \cite{ECB}, for example. 

\section{Financial system model}

The model introduced by \cite{base1} describes a financial network as a set of \textit{banks} (i.e. nodes), denoted by $V$, with different kinds of financial contracts (i.e. directed edges) between specific pairs of banks. Banks in our examples are usually denoted by $u$, $v$ or $w$. Every bank in the system has a predefined amount of \textit{external assets}, denoted by $e_v$ for bank $v$.

\subsection{Debt and CDS contracts}

We assume that each contract in the system is between two specific banks $u$ and $v$. A contract obliges $u$ (the debtor) to pay a specific amount of money to bank $v$ (the creditor), either unconditionally or based on a specific event. The amount of payment obligation in the contract is the \textit{weight} (in financial terms: the notional) of the contract.

While these contracts might be connected to earlier transactions between the banks (e.g. a loan offered by $v$ to $u$ in the past which results in a debt contract from $u$ to $v$ in the present), we assume that these initial payments are implicitly represented in the external assets of banks, and thus the external assets and the contracts together provide all the necessary information to describe the current state of the system.

The outgoing contracts of bank $v$ altogether specify a given amount of total payment obligations for $v$. If $v$ is unable to fulfill all these obligations, then we say that $v$ is \textit{in default}. In this case, we are interested in the portion of liabilities that $v$ is still able to pay, known as the \textit{recovery rate} of $v$ and denoted by $r_v$. The definition shows that we always have $r_v \in [0,1]$, and $v$ is in default exactly if $r_v<1$. The recovery rates of all banks is represented together in a recovery rate vector $r \in [0,1]^V$.

The model allows two kinds of contracts between banks in the system. In case of a simple \textit{debt} contract, $u$ has to pay a specific amount to $v$ unconditionally, i.e. in any case. On the other hand, \textit{credit default swaps} (\textit{CDSs}) are ternary financial contracts, made in reference to a third bank $w$ known as the \textit{reference entity}. A CDS describes a conditional debt which only requires $u$ to pay a specific amount to $v$ if $w$ is in default. In particular, if the weight of the CDS is $\delta$ and the recovery rate of $w$ is $r_w$, then the CDS incurs a payment obligation of $\delta \cdot (1-r_w)$ from $u$ to $v$.

In practice, CDSs often describe an insurance policy on debt contracts for the creditor bank. If $v$ is the creditor of a debt coming from $w$, and $v$ suspects that $w$ might go into default and thus will be unable to pay some of its debt, then $v$ can enter into a CDS as a creditor with some other bank $u$ in the system, in reference to $w$. If $w$ indeed defaults and cannot pay its liabilities to $v$, then $v$ instead receives some payment from $u$. Nonetheless, there could be other reasons for banks to enter CDS contracts, e.g. speculative bets about future developments in the market.

\subsection{Assets and liabilities}

Since payment obligations in CDSs depend on the recovery rate of other banks, the assets and liabilities of a bank are defined as a function of the vector $r$. The \textit{liability} of $u$ towards $v$ is the sum of payment obligations from all simple debt contracts and CDSs, i.e.
\[ l_{u,v}(r) = c_{u,v} +\sum_{w \in V} c_{u,v}^w \cdot (1-r_w), \]
where $c_{u,v}$ denotes the weight of the simple debt from $u$ to $v$, and $c_{u,v}^w$ denotes the weight of the CDS from $u$ to $v$ with reference to $w$ (understood as 0 if the contracts do not exist). The total liabilities of $u$ is then the sum of liabilities to all other banks, i.e.
\[ l_u(r) = \sum_{v \in V} l_{u,v}(r). \]
In contrast to this, the actual \textit{payment} from $u$ to $v$ can be lower than $l_{u,v}(r)$ if $u$ is in default. In this case, the model assumes that $u$ makes payments based on the \textit{principle of proportionality}, i.e. it uses all of its assets to make payments to creditors, in proportion to the respective liabilities. In practice, this means that $u$ can pay an $r_u$ portion of each liability, and thus its payment to $v$ is defined as $p_{u,v}(r) = r_u \cdot l_{u,v}(r).$

On the other hand, the \textit{assets} of $v$ is the sum of its external assets and its incoming payments, i.e.
\[ a_v(r) = e_v + \sum_{u \in V} p_{u,v}(r). \]

Recall that a recovery rate describes the portion of liabilities that a bank can pay. Hence given the assets and liabilities of each bank $v$, the recovery rate $r_v$ must satisfy $r_v=1$ if $a_v(r) \geq l_v(r)$, and $r_v=\frac{a_v(r)}{l_v(r)}$ otherwise. A vector $r$ is called a \textit{solution} (in financial terms: a clearing vector) if it describes an equilibrium point for these equalities, i.e. if for each bank $v$, $r_v$ satisfies this constraint for the assets and liabilities defined by $r$. Previous work has expressed this by defining the \textit{update function} $f\,:\,[0,1]^{V} \rightarrow [0,1]^{V}$ as
\[ f_v(r) =
\begin{cases}
    1, & \text{if } a_v(r) \geq l_v(r) \\
		\frac{a_v(r)}{l_v(r)}, & \text{if } a_v(r) < l_v(r)
\end{cases}
, \]
and defining a solution as a fixed point of the update function.

In order to model the utility function of nodes in the system, we define the \textit{payoff} (in financial terms: equity) of a bank $v$ as the amount of remaining assets after payments if a node is not in default, and 0 otherwise, i.e. $q_v(r) = \max(a_v(r)-l_v(r), 0).$
We assume that the aim of each bank is to maximize its own payoff.

Note that assets, liabilities and payoffs are always defined with regard to a certain recovery rate vector $r$. However, in order to simplify notation, we do not show $r$ explicitly when it is clear from the context, and instead we simply write e.g. $a_v$ or $q_v$.

Figure \ref{fig:example} shows an example financial system with three banks $u$, $v$ and $w$, with a consistent notation to that of \cite{base1, base2}. The system has $e_u=2$, $e_v=1$ and $e_w=0$. There are two debts of weight 2 in the system: one from $u$ to $v$, the other from $u$ to $w$. Finally, the system contains a CDS from $w$ to $v$ (also of weight 2), which is in reference to bank $u$.

Regardless of recovery rates, bank $u$ has liabilities $l_u=4$ and assets $a_u=2$, so $r_u=\frac{1}{2}$ in any case. This implies that $u$ can only make payments of $r_u \cdot 2 = 1$ to both $v$ and $w$.
Given $r_u=\frac{1}{2}$, the CDS induces a liability of $2 \cdot (1-r_u)=1$ from $w$ to $v$. Since $w$ receives an incoming payment of $p_{u,w}=1$ from $u$, we have $a_w=l_w=1$, so $w$ can still pay its liability and has a recovery rate of $r_w=1$. Finally, $v$ has incoming payments $p_{u,v}=1$ and $p_{w,v}=1$, external assets $e_w=1$, and no liabilities. This implies $a_v=3$ and $l_v=0$, and thus $r_v=1$. Hence $(r_u, r_v, r_w)=(\frac{1}{2}, 1, 1)$ is the only solution of the system, providing a payoff of $q_u=0$, $q_w=0$ and $q_v=3$ to the banks.

\begin{figure}
\centering

\begin{tikzpicture}
	
	\draw[very thick, blue, arrows=-latex] (0pt,0pt) -- (72pt,0pt);
	\draw[very thick, blue, arrows=-latex] (0pt,0pt) -- (36pt,62.1pt);
	\draw[very thick, brown, arrows=-latex] (40pt,69pt) -- (76pt,6.9pt);
	\draw[very thick, brown, densely dotted] (60pt,34.5pt) -- (0pt,0pt);
	
	\node[anchor=center] at (40pt,-7pt) {\normalsize $2$};
	\node[anchor=center] at (15pt,37pt) {\normalsize $2$};
	\node[anchor=center] at (65pt,37pt) {\normalsize $2$};
	
	\draw[black, fill=white] (0pt,0pt) circle (1.7ex);
	\draw[black, fill=white] (80pt,0pt) circle (1.7ex);
	\draw[black, fill=white] (40pt,69pt) circle (1.7ex);
	
	\node[anchor=center] at (0pt,0pt) {\normalsize $u$};
	\node[anchor=center] at (80pt,0pt) {\normalsize$v$};
	\node[anchor=center] at (40pt,69pt) {\normalsize $w$};
	
	\draw [fill=white] (4pt,-2pt) rectangle (10pt,-11pt);
	\node[anchor=center] at (7pt,-6.5pt) {\small $2$};
	\draw [fill=white] (84pt,-2pt) rectangle (90pt,-11pt);
	\node[anchor=center] at (87pt,-6.5pt) {\small $1$};
	\draw [fill=white] (44pt,67pt) rectangle (50pt,58pt);
	\node[anchor=center] at (47pt,62.5pt) {\small $0$};
	
\end{tikzpicture}
\caption{Example financial system with three banks. External assets are shown in rectangles besides the nodes, simple debt contracts are shown as blue arrows from debtor to creditor, and CDSs are shown as brown arrows from debtor to creditor, with a dotted line specifying the reference entity.}
\label{fig:example}
\end{figure}
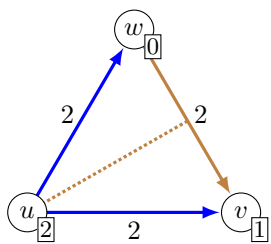

We also use two sanity assumptions introduced by previous work to exclude degenerate cases \cite{base1}. First, we assume that no bank enters into a contract with itself or in reference to itself. Furthermore, since CDSs are regarded as an insurance on debt, we require that if a bank $w$ is a reference entity of some CDS, then $w$ is the debtor of at least one debt contract of positive weight.

\section{Payments with priorities} \label{sec:prio}

While the principle of proportionality is a simple and natural assumption, financial systems often have more complex payment rules in practice. Thus we also introduce a more general model of \emph{payments with priorities}.

That is, we assume that there is a constant number of priority classes $P$, and each contract belongs to one of these priority classes. If a node $v$ is in default, then it first spends all its assets to fulfill its liabilities in the highest priority class. If $v$ does not have enough assets to fulfill all such obligation, it spends all its assets on the payments for these edges, proportionally to the amount of liabilities. On the other hand, if $v$ has more assets than highest-priority liabilities, then $v$ pays for all the liabilities in this highest priority level, and continues using the rest of its assets for the lower-priority liabilities in a similar fashion.

More formally, in our modified model, each contract in the network receives another \emph{priority} parameter (besides its weight), which is an integer in $\{ 1, ..., P \}$. The value 1 denotes the highest priority (i.e. liabilities that have to be paid first), while class $P$ denotes the lowermost priority level.

Given a clearing vector $r$, for each node $v$, let $l_v^{(\rho)}$ denote the total amount of liabilities of $v$ due to edges on priority level $\rho$. Let us also introduce the notation $ l_v^{(\leq \rho)} = \sum_{i=1}^\rho l_v^{(i)}$. Assume that $v$ has total assets of $a_v$, and a liability of $l_{v,u}$ on priority level $\rho$ towards another node $u$. Then the payment of $v$ to $u$ is defined as
\[ p_{v,u} =
\begin{cases}
    0, & \text{if } a_v \leq l_v^{(\leq \rho-1)} \\
		\frac{a_v - l_v^{(\leq \rho-1)}}{l_v^{(\rho)}} \cdot l_{v,u}, & \text{if } a_v \in \left( l_v^{(\leq \rho-1)}, l_v^{(\leq \rho)} \right) \\
    l_{v,u}, & \text{if } a_v \geq l_v^{(\leq \rho).}
\end{cases}
 \]

For an example, consider a modified version of the network in Figure \ref{fig:example}. Assume we now have 2 priority levels: the debt from $u$ to $w$ is on the higher level, while the other two contracts are on the lower level. For the case of $u$, this still means $l_u=4$, $a_u=2$ and $r_u=\frac{1}{2}$ as before. However, now $u$ uses its 2 units of assets to pay its full liability to $w$, since this contract has higher priority than the debt to $v$. Hence $p_{u,v}=0$ and $p_{u,w}=2$, resulting in $a_w=2$. Since $r_u=\frac{1}{2}$ still implies $l_w=1$ for the CDS, the rest of the payments and recovery rates remain unchanged: we still have $p_{w,v}=1$ and $r_w=r_v=1$. However, the payoffs of the banks in the system are now $q_u=0$, $q_w=1$ and $q_v=2$.

The main motivation for introducing payment priorities is that in many cases, it is very close to what happens in real-world financial systems. In many countries, economic laws provide a specific priority list for companies to follow when paying their debts in case of a default. This might start with salaries and other payments to the employees of the company first, then specific kind of debt contracts, and so on. 

Another advantage of priorities is that we can use them to replace so-called \emph{default costs}. Default costs (also studied in \cite{base1, base2}) are an extension of the original model, assuming that when banks go into default, they immediately lose a specific portion of their assets. This represents the fact that in practice, once a company goes into default, it has a range of immediate payment obligations (e.g. employees' wages) before it can make payments to other banks in the system. If we instead represent the bank's employees as a separate node in the network, and model this payment obligation with a high-priority edge, then this allows us to describe the phenomenon \textit{without} the use of default costs.

This observation is crucial because the introduction of default costs comes at a significant price: intuitively speaking, default costs introduce a point of discontinuity into the update function, and as a result, some financial systems do not have a solution at all \cite{base1}. In contrast to this, without default costs, systems always have at least one solution, as shown by a fixed-point argument in \cite{base1}. We point out that the same fixed-point theorem proof also applies in our model with payment priorities: even though the functions $p_{u,v}(r)$ and $a_v(r)$ become significantly more complicated, they are still continuous.

This shows that by introducing priorities, we obtain a model that is significantly more realistic on one hand, but also ensures the existence of a solution at the same time.

\begin{theorem}
Every financial system with payment priorities has at least one solution.
\end{theorem}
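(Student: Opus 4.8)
The plan is to follow exactly the strategy anticipated in the paragraph preceding the statement: realise solutions as fixed points of a self-map of the cube $[0,1]^V$ and invoke Brouwer's fixed-point theorem. Since $[0,1]^V$ is nonempty, compact and convex, the only hypothesis that needs checking is continuity of the update map $f$, and my aim is to reduce this to a single new ingredient, namely continuity of the priority-based payment function. Everything downstream of the payments is assembled exactly as in the base model, so the bulk of the base existence proof \cite{base1} can be imported unchanged.

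First I would pin down the update map so that it is genuinely a function of the input vector $r$ alone, with no circular dependence of payments on assets. The key observation is that the total amount a debtor $u$ pays out equals $r_u \cdot l_u(r)$, just as in the base model, where this quantity is split proportionally; in the priority model it is instead split by priority. Writing $T_u(r) = r_u \cdot l_u(r)$, I would define the payment $p_{u,v}(r)$ on a level-$\rho$ contract by the three-case formula of Section \ref{sec:prio} with the debtor's assets replaced by $T_u(r)$. Because $T_u(r) \le l_u(r)$, the boundary value $T_u(r) = l_u(r)$ falls into the top case and reproduces full payment of every liability, so a solvent bank ($r_u = 1$) pays each contract in full. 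With payments written directly in terms of $r$ in this way, the assets $a_v(r) = e_v + \sum_u p_{u,v}(r)$ and the recovery-rate update $f_v(r)$ are built from $a_v$ and $l_v$ by literally the same expressions as in the base model; in particular the boundary behaviour of $f_v$ where $a_v = l_v$ or $l_v = 0$ is handled exactly as in \cite{base1}. The whole theorem thus reduces to the continuity of the new $p_{u,v}(r)$.

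This continuity check is the main work, and the step that the more complicated formula makes nontrivial. The building blocks $l_{u,v}(r)$, the cumulative thresholds $l_u^{(\le \rho)}(r)$, and $T_u(r)$ are all continuous (the first two affine, the last a product of continuous functions), and on the interior of each of the three regions the formula is continuous, so it remains to verify agreement on the separating surfaces $T_u(r) = l_u^{(\le \rho - 1)}(r)$ and $T_u(r) = l_u^{(\le \rho)}(r)$. At the former the middle expression evaluates to $0$, matching the bottom case; at the latter it evaluates to $l_{u,v}(r)$, matching the top case. The one delicate point is the degenerate locus where $l_u^{(\rho)}(r) = 0$, which makes the middle expression formally $0/0$; there the middle region collapses to a single point, and since $0 \le l_{u,v}(r) \le l_u^{(\rho)}(r) \to 0$ the payment is squeezed to $0$, so continuity survives.

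I expect this degenerate-threshold case, together with arranging the self-map to be a bona fide function of $r$ rather than an implicit relation, to be the only genuine obstacles. Once continuity of $p_{u,v}$ is established, continuity of $f$ and the application of Brouwer are immediate, and the resulting fixed point is by construction a solution of the system.
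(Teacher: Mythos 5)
Your overall strategy is the paper's strategy, and the genuinely new ingredient you isolate is handled correctly: defining the disposable amount $T_u(r)=r_u\cdot l_u(r)$ so that the waterfall payments become an honest function of $r$ alone, checking that the three cases of the payment formula agree on the separating surfaces, and the squeeze argument at the degenerate locus $l_u^{(\rho)}(r)=0$ are all sound. This is exactly the paper's key observation that $p_{u,v}(r)$ and $a_v(r)$, though more complicated, remain continuous. Likewise, the boundary $a_v(r)=l_v(r)$ with $l_v(r)>0$ is indeed harmless, since both branches of $f_v$ give the value $1$ there.

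The gap is in the fixed-point step itself. The update map $f$ is \emph{not} continuous on $[0,1]^V$, so Brouwer's theorem does not apply, and no amount of care about the payments repairs this. The discontinuity sits at points where $l_v(r)=0$: there the definition forces $f_v(r)=1$ (a bank with zero liabilities is solvent), yet the limit of $f_v$ along nearby points can be strictly smaller. Concretely, let $v$ have $e_v=0$, no incoming contracts, and a single outgoing CDS of weight $c$ in reference to $w$ (with $w$ owing a debt to a third bank, so the sanity assumptions hold). For $r_w<1$ we have $a_v(r)=0<l_v(r)=c\,(1-r_w)$, hence $f_v(r)=0$, while at $r_w=1$ we have $l_v(r)=0$ and hence $f_v(r)=1$; moreover, with incoming CDSs the ratio $a_v/l_v$ can tend to different values along different approach directions, so no single-valued redefinition at such points restores continuity. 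This is precisely why the paper, following \cite{base1}, invokes Kakutani's fixed-point theorem rather than Brouwer's: at points with $l_v(r)=0$ the update rule is replaced by a convex-valued, upper-hemicontinuous correspondence, and one finds a fixed point of that correspondence. Your remark that the behaviour at $l_v=0$ is handled exactly as in \cite{base1} is therefore in tension with your own framing; what \cite{base1} does there is not a continuity check inside a Brouwer argument but the switch to set-valued fixed-point machinery. Once you make that substitution, with your continuity analysis of the payments feeding into the hemicontinuity of the correspondence on the set $\{\,l_v(r)>0\,\}$, your argument coincides with the paper's proof.
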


\renewcommand{\proofname}{Proof (sketch).}

\begin{proof}
The proof of this claim is identical to the same proof in the original financial system model, described in the results of \cite{base1}. The main idea of the proof is to apply the fixed-point theorem of Kakutani \cite{kakutani}, which ensures the existence of a fixed point of the update function $f$, and thus a solution. This proof can still be applied after the introduction of priorities, since both $a_v(r)$ and $l_v(r)$ still remains a continuous function of $r$, and so does the update function $f_v(r)=\min(\frac{a_v(r)}{l_v(r)}, 1)$, at least in the domain where $l_v(r)>0$. The technical part of the proof is slightly more complicated, since one has to consider the $l_v(r)=0$ case separately. For more details on this proof, we refer the reader to the work of \cite{base1}.
\end{proof}

\renewcommand{\proofname}{Proof.}

\section{Influencing the financial system}

We now discuss a wide range of actions that a bank can execute in order to obtain a more favorable outcome in the system. Note that except for Section \ref{sec:reprio} which explicitly studies readjusting priorities, all the results also hold in the base model without priorities.

\subsection{Removing an incoming debt}

One of the most natural actions for a bank $v$ would be to simply cancel a debt contract in which $v$ is a creditor. Since the creditor is considered the beneficiary of a debt, in some financial/legal frameworks, the regulations may indeed allow a bank to nullify an incoming debt contract. However, in case of a financial system with short positions, it is actually possible that in the end, this indirectly increases the payoff of $v$.

\begin{theorem} \label{th:removedebt}
Removing an incoming debt of $v$ can increase the payoff of $v$.
\end{theorem}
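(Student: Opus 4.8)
The claim is existential, so the plan is to exhibit one explicit system in which the removal strictly helps $v$, and then confirm the clearing vectors by direct computation. The mechanism I would engineer is a \emph{coupling} of $v$'s assets and its liabilities through a single defaulting bank $u$: I make $v$ the creditor of a simple debt from $u$, and at the same time the debtor of a CDS written on $u$ as reference entity. With this design, $v$'s incoming payment from $u$ grows with $r_u$, whereas $v$'s CDS liability equals $\delta\cdot(1-r_u)$ and therefore \emph{shrinks} as $r_u$ grows. Removing the incoming debt reduces $u$'s total liabilities and thus raises $r_u$; the point is to choose the weights so that the resulting saving on $v$'s CDS liability exceeds the payment $v$ forfeits by cancelling the debt.

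Concretely, I would take three banks $u,v,w$ with $e_u=1$, $e_v=1$, $e_w=0$, two simple debts $u\to v$ and $u\to w$ of weight $1$ each, and a CDS $v\to w$ of weight $2$ in reference to $u$ (the debt $u\to v$ being the one $v$ removes). Before removal, $u$ has $a_u=1$ and $l_u=2$, forcing $r_u=\frac12$; this gives $v$ an incoming payment $p_{u,v}=\frac12$ and a CDS liability $l_v=2\cdot(1-\frac12)=1$, so $a_v=\frac32$, $l_v=1$ and $q_v=\frac12$. After $v$ cancels the debt $u\to v$, bank $u$ has $l_u=1=a_u$, so $r_u=1$; now $v$ receives nothing from $u$ but its CDS liability drops to $2\cdot(1-1)=0$, giving $a_v=1$, $l_v=0$ and $q_v=1>\frac12$. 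In general the removal is profitable when the CDS notional exceeds the weight of the cancelled debt, which is essentially the single inequality the parameters must satisfy.

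The reassuring feature of this gadget is that uniqueness of the solution is immediate: $u$ has no incoming edges and only simple-debt liabilities, so $a_u$ and $l_u$ are constants and $r_u$ is pinned down with no feedback, after which $r_v$ and $r_w$ are determined in turn; hence each configuration has a single clearing vector and the payoff comparison is unambiguous. The two points I would be careful about are therefore not analytic but structural. First, $e_v$ must be large enough that $v$ stays solvent in both configurations, so that $q_v=a_v-l_v$ rather than $0$ and the comparison is meaningful. Second, the sanity assumption that a reference entity be the debtor of a positive-weight debt must survive the removal, which is precisely why the auxiliary debt $u\to w$ is included---without it, deleting $u\to v$ would strip $u$ of its only outgoing debt and invalidate the CDS referencing it.
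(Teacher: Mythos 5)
Your proof is correct, but it works through a genuinely different mechanism than the paper's. The paper's example (Figure \ref{fig:remove}) keeps $v$ purely on the asset side: $v$ has no liabilities at all, and the gain is routed through two chained CDSs over five banks --- removing the debt raises $r_u$, which cuts off the CDS income of the intermediate bank $w$, pushing $w$ into default, which in turn triggers a payment to $v$ on a second CDS written in reference to $w$. Two CDSs are needed precisely because $v$'s assets must end up \emph{increasing} in $r_u$, and each CDS flips the sign once. Your gadget instead makes $v$ the \emph{debtor} of a single CDS referencing $u$, so the benefit arrives on the liability side: $v$'s CDS liability $2\cdot(1-r_u)$ is itself decreasing in $r_u$, and no intermediary is needed. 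This yields a smaller construction (three banks, one CDS) with an equally airtight uniqueness argument ($r_u$ is pinned down with no feedback, then $r_v$, then $r_w$), and your two structural checks --- keeping $v$ solvent in both configurations, and retaining the debt $u\to w$ so that $u$ remains a legitimate reference entity after the removal --- are exactly the right ones. What the paper's larger example buys in exchange is that it exhibits the effect propagating through banks at distance from $v$ in the network, it never touches the $\max(\cdot,0)$ truncation in the payoff since $v$ has no outgoing contracts, and the same gadget is reused for Theorem \ref{th:injectu} and for the partial-removal result of Figure \ref{fig:partial}.

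One side remark of yours is imprecise, though it does not affect the validity of the concrete example. Profitability is not governed by the CDS notional exceeding the weight of the cancelled debt: redoing your computation with general weights (debt $u\to w$ of weight $b$, external assets $e_u=b$, cancelled debt $u\to v$ of weight $c$, CDS notional $\delta$) gives a payoff of $e_v + (cb-\delta c)/(b+c)$ before removal and $e_v$ after, so the removal helps exactly when $\delta > b$ --- the weight of the \emph{remaining} debt --- independently of $c$. For instance, with $b=e_u=10$, $c=1$, $\delta=2$ the notional exceeds the cancelled debt's weight, yet removal strictly decreases $v$'s payoff. In your example $b=c=1$, so the distinction is invisible and the construction stands.
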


More precisely, our claim is as follows: there exists a financial system $S$ such that (i) $S$ has only one solution $r$, in which $v$ has payoff $q_v$ and an incoming debt contract, and (ii) in the modified financial system $S'$ obtained by removing this debt, there is again only one solution $r'$, in which the payoff $q_v'$ satisfies $q_v' > q_v$.

\begin{proof}
Consider the network in Figure \ref{fig:remove}. Note that the unlabeled nodes in this system can always pay all their liabilities, so their recovery rate is always 1. Originally, the system has $a_u=1$ and $l_u=2$, thus $r_u=\frac{1}{2}$ in any case. This implies $a_w=2 \cdot (1-\frac{1}{2}) = 1$, and thus $r_w=1$. With $r_w=1$, $v$ obtains no payment from its incoming CDS at all, so the payoff of $v$ in this only solution is $q_v=p_{u,v}=r_u \cdot 1 = \frac{1}{2}$.

One the other hand, consider the system obtained by removing the debt contract from $u$ to $v$. In this case, $a_u=l_u=1$, and thus $r_u=1$. This means that $w$ receives no incoming payments at all, and with $a_w=0$, we have $r_w=0$. As a result, $v$ obtains a payment of $2 \cdot (1-r_w)=2$ from its incoming CDS, so we have $q_v=2$.
\end{proof}

\begin{figure}
\centering
\captionsetup{justification=centering}
\hspace{0.02\textwidth}
\minipage{0.4\textwidth}
	\centering

\begin{tikzpicture}
	
	\draw[very thick, blue, arrows=-latex] (30pt,80pt) -- (3pt,35pt);
	\draw[very thick, blue, arrows=-latex] (30pt,80pt) -- (84pt,62pt);
	\draw[very thick, blue, arrows=-latex] (60pt,30pt) -- (86pt,4pt);
	\draw[very thick, brown, arrows=-latex] (0pt,30pt) -- (54pt,30pt);
	\draw[very thick, brown, arrows=-latex] (90pt,0pt) -- (90pt,54pt);
	\draw[very thick, brown, densely dotted] (90pt,30pt) -- (60pt,30pt);
	\draw[very thick, brown, densely dotted] (30pt,30pt) -- (30pt,80pt);
	
	\node[anchor=center] at (30pt,23pt) {\normalsize $2$};
	\node[anchor=center] at (96pt,30pt) {\normalsize $2$};
	\node[anchor=center] at (65pt,73pt) {\normalsize $1$};
	\node[anchor=center] at (12pt,59pt) {\normalsize $1$};
	\node[anchor=center] at (72pt,10pt) {\normalsize $1$};
	
	\draw[black, fill=white] (0pt,30pt) circle (1.7ex);
	\draw[black, fill=white] (60pt,30pt) circle (1.7ex);
	\draw[black, fill=white] (30pt,80pt) circle (1.7ex);
	\draw[black, fill=white] (90pt,60pt) circle (1.7ex);
	\draw[black, fill=white] (90pt,0pt) circle (1.7ex);
	
	\node[anchor=center] at (60pt,30pt) {\normalsize $w$};
	\node[anchor=center] at (30pt,80pt) {\normalsize $u$};
	\node[anchor=center] at (90pt,60pt) {\normalsize $v$};
	
	\draw [fill=white] (4pt,28pt) rectangle (10pt,19pt);
	\node[anchor=center] at (7pt,23.5pt) {\small $2$};
	\draw [fill=white] (64pt,28pt) rectangle (70pt,19pt);
	\node[anchor=center] at (67pt,23.5pt) {\small $0$};
	\draw [fill=white] (34pt,78pt) rectangle (40pt,69pt);
	\node[anchor=center] at (37pt,73.5pt) {\small $1$};
	\draw [fill=white] (94pt,58pt) rectangle (100pt,49pt);
	\node[anchor=center] at (97pt,53.5pt) {\small $0$};
	\draw [fill=white] (94pt,-2pt) rectangle (100pt,-11pt);
	\node[anchor=center] at (97pt,-6.5pt) {\small $2$};
	
\end{tikzpicture}
	\caption{Example for removing an incoming debt}
	\label{fig:remove}
\endminipage\hfill
\hspace{0.09\textwidth}
\minipage{0.4\textwidth}
	\centering

\begin{tikzpicture}
	
	\draw[very thick, blue, arrows=-latex] (30pt,80pt) -- (3pt,35pt);
	\draw[very thick, blue, arrows=-latex] (30pt,80pt) -- (84pt,62pt);
	\draw[very thick, blue, arrows=-latex] (60pt,30pt) -- (86pt,4pt);
	\draw[very thick, brown, arrows=-latex] (0pt,30pt) -- (54pt,30pt);
	\draw[very thick, brown, arrows=-latex] (90pt,0pt) -- (90pt,54pt);
	\draw[very thick, brown, densely dotted] (90pt,30pt) -- (60pt,30pt);
	\draw[very thick, brown, densely dotted] (30pt,30pt) -- (30pt,80pt);
	
	\node[anchor=center] at (30pt,24pt) {\footnotesize $2/\gamma_0$};
	\node[anchor=center] at (96pt,30pt) {\normalsize $2$};
	\node[anchor=center] at (65pt,73pt) {\normalsize $1$};
	\node[anchor=center] at (12pt,59pt) {\normalsize $1$};
	\node[anchor=center] at (72pt,10pt) {\normalsize $1$};
	
	\draw[black, fill=white] (0pt,30pt) circle (1.7ex);
	\draw[black, fill=white] (60pt,30pt) circle (1.7ex);
	\draw[black, fill=white] (30pt,80pt) circle (1.7ex);
	\draw[black, fill=white] (90pt,60pt) circle (1.7ex);
	\draw[black, fill=white] (90pt,0pt) circle (1.7ex);
	
	\node[anchor=center] at (60pt,30pt) {\normalsize $w$};
	\node[anchor=center] at (30pt,80pt) {\normalsize $u$};
	\node[anchor=center] at (90pt,60pt) {\normalsize $v$};
	
	\draw [fill=white] (-2pt,27.5pt) rectangle (13.5pt,19.7pt);
	\node[anchor=center] at (6pt,23.5pt) {\scriptsize $2/\!\gamma_0$};
	\draw [fill=white] (64pt,28pt) rectangle (70pt,19pt);
	\node[anchor=center] at (67pt,23.5pt) {\small $0$};
	\draw [fill=white] (28.5pt,76.5pt) rectangle (45pt,69pt);
	\node[anchor=center] at (37pt,72.5pt) {\scriptsize $2\!\!-\!\!\gamma_0$};
	\draw [fill=white] (94pt,58pt) rectangle (100pt,49pt);
	\node[anchor=center] at (97pt,53.5pt) {\small $0$};
	\draw [fill=white] (94pt,-2pt) rectangle (100pt,-11pt);
	\node[anchor=center] at (97pt,-6.5pt) {\small $2$};
	
\end{tikzpicture}
	\caption{Removing $\gamma_0$ portion of an incoming debt}
	\label{fig:partial}
\endminipage\hfill
\hspace{0.01\textwidth}
\end{figure}

The proof shows that releasing an outgoing debt increases the recovery rate of $u$, which indirectly yields an extra payoff for $v$. Note that $v$ could also achieve this result by donating funds to $u$, i.e. by increasing $e_u$ by $1$. This is even more realistic in a legal framework: the owner(s) of bank $v$ can simply donate a specific amount to bank $u$, who would accept it in hope of avoiding default. Naturally, this is only a favorable step to $v$ if by donating $x$ units of money, it can increase its own payoff by more than $x$.

\begin{theorem} \label{th:injectu}
Donating external assets to another node $u$ can be a favorable step. 
\end{theorem}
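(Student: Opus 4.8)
The plan is to reuse the very system constructed for Theorem~\ref{th:removedebt} (Figure~\ref{fig:remove}), observing that donating funds to $u$ triggers the same beneficial cascade as cancelling the debt from $u$ to $v$: both actions raise $r_u$, which through the two chained CDSs ultimately increases the payment that $v$ collects. First I would restate the unique original solution, in which $a_u = 1 < l_u = 2$ forces $r_u = \frac{1}{2}$; this pins $r_w = 1$, so the CDS into $v$ pays nothing and $q_v = \frac{1}{2}$.

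Next I would let $v$ donate $x = 1$ unit of its external assets to $u$, i.e.\ replace $e_u = 1$ by $e_u = 2$, and trace the new unique solution. Now $a_u = 2 = l_u$ gives $r_u = 1$, so the CDS into $w$ carries liability $2(1-r_u) = 0$; hence $a_w = 0 < l_w = 1$ and $r_w = 0$; this makes the CDS toward $v$ pay its full notional $2(1-r_w) = 2$. Since $v$ still also receives $r_u \cdot 1 = 1$ on its retained debt from $u$, the new payoff is $q_v' = 3$. Uniqueness in both systems is immediate: the auxiliary (unlabeled) nodes each hold $2$ units of external assets, which always covers their maximal possible CDS liability of $2$, so their recovery rates are fixed at $1$; and $u$ has no incoming edges, so $r_u$ is determined outright, after which $r_w$ and then $r_v$ are forced.

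It then remains to check the favorability criterion from the preceding discussion, namely that the gain in payoff exceeds the donated amount. Here $q_v' - q_v = 3 - \frac{1}{2} = \frac{5}{2} > 1 = x$, so even after subtracting the cost of the donation $v$ is strictly better off; equivalently, its net position improves from $\frac{1}{2}$ to $q_v' - x = 2$.

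The main point to handle carefully is not the arithmetic but pinning down the model of ``favorable'' precisely---charging $v$ for the donated $x$ and comparing $q_v' - x$ against $q_v$---and justifying that $x = 1$ is exactly the amount that pushes $r_u$ to $1$ (donating less leaves $r_w > 0$ and a strictly smaller CDS payment to $v$, while donating more only wastes money). Beyond this, everything reduces to the same single-pass evaluation of recovery rates already used for Theorem~\ref{th:removedebt}.
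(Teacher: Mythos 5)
Your proposal is correct and takes essentially the same approach as the paper: the paper's proof likewise reuses the system of Figure~\ref{fig:remove}, letting $v$ donate $x=1$ to $u$ so that $r_u=1$, hence $r_w=0$ and $q_v'=3$, which beats $q_v+x=\tfrac{3}{2}$. Your additional remarks on uniqueness of the solutions and on $x=1$ being the optimal donation are accurate, just more detail than the paper itself provides.
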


More precisely, there is a system $S$ such that (i) $S$ has only one solution $r$, in which node $v$ has payoff $q_v$, and (ii) in the system $S'$ obtained by replacing external funds of $u$ by $e_u':=e_u+x$, there is again only one solution $r'$ which satisfies $q'_v > q_v+x$.

The proof of the theorem is identical to that of Theorem \ref{th:removedebt}: if $v$ increases $e_u$ by $x=1$ in Figure \ref{fig:remove}, then again $r_u=1$, which ultimately provides a payoff of $q_v=3$ (as opposed to the original $\frac{1}{2}$). Note that in general, this action may allow banks to improve their position by affecting a bank that is arbitrarily far in the topology of the network.

Finally, if $v$ can increase its payoff by releasing an incoming debt, it is natural to wonder if it is always optimal for $v$ to erase the entire debt, or whether it could be beneficial to only reduce the amount in some cases. We show that reducing a debt to a given portion $\gamma_0$ of its original weight can also be an optimal strategy.

\begin{theorem}
For each constant $\gamma_0 \in [0,1]$, there is a financial system where bank $v$ achieves its maximal payoff by reducing an incoming debt to a $\gamma_0$ portion of its original weight. 
\end{theorem}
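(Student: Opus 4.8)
The plan is to reuse the gadget of Theorem~\ref{th:removedebt}, but to balance the two competing effects of reducing the incoming debt from $u$ against each other, so that the optimum is attained at an interior portion. As in Figure~\ref{fig:partial}, I would take the same five-node topology, let $\gamma \in [0,1]$ denote the portion to which $v$ reduces its incoming debt (so the new weight of the debt $u \to v$ equals $\gamma$ times the original), and treat the external assets of $u$ and the notional of the CDS pointing into $w$ as parameters to be fixed later. The first observation is that the recovery-rate dependencies are acyclic: $r_u$ depends only on $u$'s own fixed assets and liabilities, the two auxiliary nodes always have recovery rate $1$, $r_w$ is determined by $r_u$, and finally $r_v=1$ while $v$'s payoff is determined by $r_u$ and $r_w$. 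Hence for every value of $\gamma$ the system has a unique solution, and $q_v$ is a well-defined function $q_v(\gamma)$ that can be written down in closed form.

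Next I would compute $q_v(\gamma)$ explicitly. Reducing the debt lowers $l_u$ and therefore raises $r_u$; a higher $r_u$ lowers the liability of the CDS into $w$, which lowers $a_w$ and hence $r_w$, which in turn raises the payment $v$ receives from the CDS into $v$. Against this indirect gain stands the direct loss of the payment $p_{u,v} = r_u \cdot \gamma$. Writing the two saturating quantities as $r_u = \min\!\big(1,\, e_u/(1+\gamma)\big)$ and $r_w = \min\!\big(1,\, N\,(1-r_u)\big)$, where $N$ is the notional of the CDS into $w$, the payoff takes the form $q_v(\gamma) = r_u\,\gamma + 2\,(1-r_w)$. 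The key structural feature is the threshold at which $r_u$ ceases to be saturated at $1$: below it one has $r_u = 1$ and $r_w = 0$, so $q_v(\gamma) = \gamma + 2$ is strictly increasing, whereas above it both $r_u$ and $r_w$ become genuine fractions and $q_v$ turns strictly decreasing.

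I would then choose the external assets $e_u$ so that this threshold lands exactly at the prescribed portion $\gamma_0$, and choose the CDS notional $N$ just large enough that $r_w$ reaches $1$ precisely at $\gamma = 1$. Combining the two monotonicity statements shows that $q_v$ is strictly unimodal with a unique maximum at $\gamma = \gamma_0$, where $q_v = \gamma_0 + 2$; in particular this strictly beats both full removal ($\gamma = 0$, giving $q_v = 2$) and no removal ($\gamma = 1$), so reducing the debt to exactly the $\gamma_0$ portion is the unique optimal action.

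The step I expect to be the real work is the strictly-decreasing regime above the threshold. Here I must verify that the marginal loss from the direct payment $r_u\gamma$ always dominates the marginal gain $2(1-r_w)$ coming through the CDS chain. This reduces to checking that a single rational function of $\gamma$ has negative derivative on the whole interval, and it is exactly this inequality that dictates how large the CDS notional must be chosen; the degenerate endpoints $\gamma_0 \in \{0,1\}$ then collapse to the already-established removal statement of Theorem~\ref{th:removedebt} and to the trivial ``no action'' case, respectively.
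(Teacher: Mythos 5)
Your proposal is correct and follows essentially the same route as the paper: your gadget is exactly the system of Figure~\ref{fig:partial} (your choices $e_u = 1+\gamma_0$ and $N = 2/(1-\gamma_0)$ are the paper's $2-\gamma_0$ and $2/\gamma_0$ after the change of variable from ``portion kept'' to ``portion removed''), and your unimodality argument around the threshold where $u$ stops defaulting mirrors the paper's two-case analysis. The derivative check you deferred does go through: for $\gamma \in (\gamma_0, 1]$ one has $q_v(\gamma) = 2 + \bigl((1+\gamma_0-2N)\gamma + 2N\gamma_0\bigr)/(1+\gamma)$, whose derivative has the sign of $(1+\gamma_0)(1-2N) < 0$, so the defaulting branch is indeed strictly decreasing.
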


\begin{proof}
Consider a modified version of our previous systems, as shown in Figure \ref{fig:partial}. We show that for any $\gamma_0$ parameter, the optimal action of $v$ in this system is to let go of $\gamma_0$ portion of the incoming debt from $u$, i.e. to reduce its weight to $1-\gamma_0$.

Assume that $v$ reduces the incoming debt by a $\gamma$ portion for some $\gamma \in [0,1]$, and let us analyze the final payoff of $v$ as a function of $\gamma$. Note that a choice of $\gamma=\gamma_0$ implies that $a_u=l_u$ exactly, and thus $r_u=1$, $r_w=0$ and $q_v=(1-\gamma_0)+2=3-\gamma_0$ as a result. Hence we have to show that $q_v<3-\gamma_0$ in any other case.

First consider the case when $\gamma < \gamma_0$. Since $u$ has $l_u=1+(1-\gamma)=2-\gamma > 2-\gamma_0$, $u$ is in default. Then $r_u=\frac{2-\gamma_0}{2-\gamma}$, and thus $w$ receives an incoming payment of
\[ a_w = \frac{2}{\gamma_0} \cdot \left( 1 - \frac{2-\gamma_0}{2-\gamma} \right) = \frac{2 \cdot (\gamma_0-\gamma)}{\gamma_0 \cdot (2-\gamma)}. \]
This is a decreasing function in $\gamma$, and it equals $1$ exactly for $\gamma=0$, so $a_w<1$ for any $\gamma>0$, and thus $w$ is in default with $r_w=a_w$. Then the amount $v$ receives from the CDS is
\[ 2 \cdot (1-r_w) = 2 \cdot \left( 1 - \frac{2 \cdot (\gamma_0-\gamma)}{\gamma_0 \cdot (2-\gamma)} \right) = 2 \cdot \frac{\gamma \cdot (2 - \gamma_0)}{\gamma_0 \cdot (2 - \gamma)}. \]
Since $q_v=(1-\gamma) \cdot r_u + 2 \cdot (1-r_w)$, we need to show that
\[ 3-\gamma_0 > (1-\gamma) \cdot \frac{2-\gamma_0}{2-\gamma} + 2 \cdot \frac{\gamma \cdot (2 - \gamma_0)}{\gamma_0 \cdot (2 - \gamma)}. \]
After multiplying this by $\gamma_0 \cdot (2-\gamma)$, expanding the brackets and removing terms that cancel out, we are left with $\gamma_0 \cdot (4-\gamma_0) > \gamma \cdot (4-\gamma_0)$, which naturally holds since $\gamma < \gamma_0$.

On the other hand, if $\gamma > \gamma_0$, then $a_u > l_u$, and thus $r_u=1$. This means $r_w=0$, so $v$ receives an amount of $2$ from the CDS, and has a total payoff of $(1-\gamma) + 2 = 3- \gamma$, which is again less than $3-\gamma_0$.
Thus selecting $\gamma=\gamma_0$ is indeed the best option for $v$.
\end{proof}

\subsection{Investing more external assets}

In light of Theorem \ref{th:injectu}, it is natural to ask if $v$ can also increase its payoff by injecting further funds into its own external assets. That is, if increasing $e_v$ by $x$ would allow $v$ to increase its payoff by more than $x$ in the only solution, then the owner of bank $v$ would be motivated to invest these extra funds into the bank.

However, somewhat surprisingly, it turns out that this is not possible in the same way as in previous cases: we cannot increase the payoff of $v$ by more than $x$ in the only solution of the system. More specifically, if a vector $r'$ is a solution to the new system 
and provides a payoff of $q_v'$, then $r'$ was already a solution of the original system with a payoff of $q_v'-x$.

\begin{theorem}
Assume that every solution of system $S$ provides a payoff of at most $q_v$ for $v$. Then 
setting $e_v'=e_v+x$ cannot introduce a new solution $r'$ with $q_v' > q_v+x$.
\end{theorem}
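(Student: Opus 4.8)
The plan is to establish the stronger statement flagged in the text just before the theorem: any solution $r'$ of the augmented system $S'$ with $q_v' > q_v + x$ is in fact already a solution of the original system $S$, where it yields $v$ a payoff of exactly $q_v' - x$. This immediately contradicts the hypothesis that every solution of $S$ gives $v$ at most $q_v$, and so proves the theorem by contradiction.

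First I would observe that injecting external assets into $v$ perturbs the update function in a single coordinate, and only by a constant shift. The liability $l_v(r)$ depends solely on the contract structure and on the recovery rates of reference entities, never on $e_v$; likewise every payment $p_{w,u}(r)$ is determined by $r$ and the fixed contract weights, not by any external-asset values. Consequently $a_u^{S'}(r) = a_u^{S}(r)$ for all $u \neq v$ and $a_v^{S'}(r) = a_v^{S}(r) + x$, so $f_u^{S'} \equiv f_u^{S}$ for $u \neq v$, and the two update functions differ only in the $v$-coordinate, through the added $x$.

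Next I would use the gap $q_v' > q_v + x$ to pin $v$ out of default in \emph{both} systems. Since payoffs are nonnegative and $x \geq 0$, we get $q_v' > 0$, hence $r_v' = 1$ in $S'$ and $v$ pays its full liabilities. Moreover $q_v' = a_v^{S'}(r') - l_v(r') > x$, so subtracting $x$ gives $a_v^{S}(r') = a_v^{S'}(r') - x > l_v(r')$; thus $v$ is out of default in $S$ as well, i.e. $f_v^{S}(r') = 1 = r_v'$. Combined with $f_u^{S}(r') = f_u^{S'}(r') = r_u'$ for every $u \neq v$, this shows $r'$ is a fixed point of $f^{S}$, hence a solution of $S$. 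Evaluating $v$'s payoff there gives $a_v^{S}(r') - l_v(r') = q_v' - x$, which by assumption is at most $q_v$, contradicting $q_v' > q_v + x$.

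I expect the only delicate step to be confirming that $v$ remains out of default after the $x$ units are removed: this is precisely where the strict inequality $q_v' > q_v + x \geq x$ is needed, and it is what fails for a solution in which $v$ has small positive (or zero) payoff. Everything else reduces to the clean separation of variables noted above, under which all payments and all other banks' assets \emph{literally} coincide in $S$ and $S'$ for a common recovery-rate vector. I would also remark that the same argument applies verbatim in the priority model of Section~\ref{sec:prio}, since $v$ pays all its liabilities in full and the priority-based splitting of payments only affects a bank that is in default.
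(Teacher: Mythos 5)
Your proof is correct and takes essentially the same route as the paper: both arguments show that the strict gap $q_v' > q_v + x \geq x$ lets $v$ cover all its liabilities even after the $x$ units are removed, so $r'$ (with identical payments on every edge) is already a solution of $S$ with payoff $q_v' - x > q_v$, contradicting the hypothesis. Your write-up is simply more explicit about the coordinate-wise comparison of the update functions, which the paper leaves implicit.
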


\begin{proof}
Assume that such a new solution $r'$ is introduced. Since payoff is always nonnegative, $q_v \geq 0$, and thus $q_v' > x$ in $r'$. This means that we have $a_v' > x + l_v'$ in $r'$. Hence, even if $e_v'$ was reduced by $x$ (back to its original value $e_v$), then $v$ could still pay all of its liabilities; thus the same recovery vector $r'$ and the same payments on each edge also provide a solution in the original system $S$. The payoff of $v$ in this solution is $q_v'-x$, which is larger than $q_v$ by assumption. This contradicts the fact that $q_v$ was the maximal payoff for $v$ in $S$.
\end{proof}

Naturally, if $v$ is in default, then recovery rate of $v$ can indeed be increased in the only solution by injecting extra funds. However, an increase of $r_v$ does not translate to an increase in payoff, so it is a waste for the owners of $v$ to invest resources for this.

On the other hand, while its not possible to produce a new, more favorable solution for $v$, it is possible to invalidate solutions that are unfavorable to $v$. That is, if the original financial system had multiple solutions with different payoffs for $v$, and $v$ is unsure which of these solutions will be implemented by a financial authority, then it is possible that $v$ can inject extra funds to remove a solution where its payoff is much smaller than in other solutions. This may allow $v$ to increase its worst-case payoff, or its payoff in expectation (in case of a randomized choice of solution).

\begin{theorem}
Given a financial system $S$ with two solutions, it is possible that setting $e_v'=e_v+x$ removes the solution which is unfavorable to $v$.
\end{theorem}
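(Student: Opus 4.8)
The plan is to exhibit a single financial system with exactly two solutions whose payoffs for $v$ differ, and then to show that a suitable injection of external funds into $v$ destroys the unfavorable one while leaving a favorable one intact. The mechanism I would exploit is the self-referential CDS loop that is the standard source of non-uniqueness in this model: I make $v$ itself a reference entity, so that the default of $v$ becomes self-fulfilling. Concretely, I take a debtor $u$ with a fixed pool of external assets, a debt of weight $1$ from $u$ to $v$, and a CDS from $u$ to a further bank $z$ in reference to $v$; I also give $v$ a small amount of external assets $e_v=\varepsilon$ together with a unit debt to a sink $s$ (the latter both lets $v$ be in default and satisfies the reference-entity sanity assumption). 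The loop is then monotone: a lower $r_v$ raises $u$'s CDS liability, which by proportionality lowers $u$'s payment to $v$, which lowers $a_v$ and hence $r_v$ again.

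Writing $t=r_v$, once the weights are fixed the update reduces to $f_v(t)=\min\bigl(\varepsilon+\frac{1}{3-2t},\,1\bigr)$. I would choose the CDS weight so that, for $\varepsilon=0$, the equation $t=\frac{1}{3-2t}$ has the two roots $t=1$ and $t=\frac{1}{2}$, and then keep $\varepsilon$ small and positive so that both a solvent equilibrium (capped at $t=1$) and an interior default equilibrium near $t=\frac{1}{2}$ survive. The point of the small $\varepsilon>0$ is exactly to separate the payoffs: in the solvent solution $a_v>l_v$ gives $q_v=\varepsilon>0$, whereas the interior solution has $r_v<1$ and hence $q_v=0$, so the default solution is the unfavorable one. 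I would then verify that the remaining banks behave as claimed: $u$ is always in default with its total payments equal to its assets, while the sink banks $z$ and $s$ keep recovery rate $1$; this is a routine check.

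For the injection step I set $e_v'=\varepsilon+x$, which replaces the update by $f_v(t)=\min\bigl(\varepsilon+x+\frac{1}{3-2t},\,1\bigr)$. Since $\frac{1}{3-2t}\ge\frac{1}{3}$ on $[0,1]$, choosing $x$ large enough that $\varepsilon+x+\frac{1}{3}\ge 1$ forces $f_v(t)=1$ identically, so $t=1$ is the unique fixed point and the interior default solution is gone; the surviving solution still has $q_v=\varepsilon+x>0$ and is favorable. I would conclude by noting that the general existence theorem guarantees a solution still exists after the injection, so the only thing the argument must establish is that the one which disappears is precisely the unfavorable one.

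The main obstacle is not any single calculation but the simultaneous tuning of three requirements: that before the injection the system genuinely has two solutions, that these assign strictly different payoffs to $v$ (this is why the exactly-balanced $\varepsilon=0$ gadget, where both solutions give $q_v=0$, does not suffice and the small positive buffer is needed), and that after the injection exactly the unfavorable solution is eliminated. Keeping $\varepsilon$ small enough to preserve the interior equilibrium while still making the two payoffs differ is the delicate balance; once the monotonicity of the loop and the lower bound on $\frac{1}{3-2t}$ are in hand, the removal of the default solution by raising $v$'s asset floor follows cleanly.
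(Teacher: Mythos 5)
Your construction takes a genuinely different route from the paper's: the paper closes the feedback loop through a third bank ($v$'s debt payment determines $r_u$, a CDS in reference to $u$ determines $r_w$, and a weight-$100$ CDS in reference to $w$ feeds $v$), which makes $v$'s assets the linear function $a_v=100\,r_v$ and yields exactly two solutions, $r_v=0$ and $r_v=1$, with payoffs $0$ and $99$. You instead make $v$ itself the reference entity and use proportional dilution of $u$'s fixed assets, giving the convex response $a_v=\varepsilon+\frac{1}{3-2t}$. That mechanism is legitimate, and your injection step is sound: once $\varepsilon+x+\frac{1}{3}\geq 1$ the update map is identically $1$, so $r_v=1$ is the unique solution and $q_v'=\varepsilon+x>x$.

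The gap is in the step you yourself single out as critical, namely that before the injection ``the system genuinely has two solutions.'' For small $\varepsilon>0$ it has three. The interior fixed points solve $t=\varepsilon+\frac{1}{3-2t}$, i.e.
\[ 2t^{2}-(3+2\varepsilon)\,t+(1+3\varepsilon)=0, \]
whose discriminant $1-12\varepsilon+4\varepsilon^{2}$ is positive for small $\varepsilon$; both roots lie in $[0,1)$, one near $\frac{1}{2}+2\varepsilon$ and one near $1-\varepsilon$. Geometrically, at $\varepsilon=0$ the convex curve crosses the diagonal at $t=1$ with slope $2>1$, so lifting it by $\varepsilon$ pushes that crossing left into the interior, while the cap at $1$ keeps $t=1$ a fixed point as well. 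Hence your pre-injection solution set is (approximately) $\{\frac{1}{2}+2\varepsilon,\; 1-\varepsilon,\; 1\}$ with payoffs $0$, $0$, $\varepsilon$, and the paper's precise formulation --- a system with two solutions $r_1,r_2$ such that after the injection the only remaining solution is $r_1$ --- is not what your gadget delivers as written. The conclusion does survive in spirit, since every solution with $r_v<1$ has $q_v=0$ and all of them are destroyed by the injection, so the worst case still improves from $0$ to $\varepsilon+x>x$. To repair the write-up, either acknowledge the third solution and argue that both unfavorable solutions are removed, or choose the tangency value $\varepsilon=\frac{3-2\sqrt{2}}{2}$ (merging the two interior roots, so the system has exactly two solutions), or switch to a linear feedback gadget as in the paper, where the interior equation $t=100\,t$ has the single root $t=0$.
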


More precisely, there is a system $S$ such that (i) $S$ has two solutions $r_1$ and $r_2$, with solution $r_2$ satisfying $q_v=0$, and (ii) in the system $S'$ obtained by setting $e_v':=e_v+x$, the only solution is $r'=r_1$, satisfying $q_v' > x$.

\begin{proof}
Consider the system in Figure \ref{fig:inject}, which has two solutions. The design of the system ensures $r_u=r_v$ and $r_w=1-r_u$. If $r_v=1$, then this implies $r_u=1$ and $r_w=0$, in which case $v$ has $a_v=100$, giving a solution with $q_v=99$. On the other hand, if $r_v<1$, then it has to satisfy
\[ r_v=\frac{100 \cdot (1-r_w)}{1} = 100 \cdot r_u = 100 \cdot r_v. \]
This is only satisfied if $r_v=0$, so this is the only other solution, providing $q_v=0$.

Now assume that $v$ invests $x=1$ extra funds to have $e_v=1$. In this case, the system always has $r_v=1$, hence $r_u=1$ and $r_w=0$. This implies that $v$ obtains a payment of $100$ in the CDS, resulting in a payoff of $q_v=100$. Even if we subtract the extra $x=1$ investment, $v$ has an extra payoff of $99$, and thus it has indeed increased its worst-case payoff significantly.
\end{proof}

\begin{figure}
\centering
\captionsetup{justification=centering}
\hspace{0.02\textwidth}
\minipage{0.4\textwidth}
	\centering

\begin{tikzpicture}
	
	\draw[very thick, blue, arrows=-latex] (30pt,80pt) -- (3pt,35pt);
	\draw[very thick, blue, arrows=-latex] (90pt,80pt) -- (36pt,80pt);
	\draw[very thick, blue, arrows=-latex] (60pt,30pt) -- (86pt,4pt);
	\draw[very thick, brown, arrows=-latex] (0pt,30pt) -- (54pt,30pt);
	\draw[very thick, brown, arrows=-latex] (90pt,0pt) -- (90pt,74pt);
	\draw[very thick, brown, densely dotted] (90pt,40pt) -- (60pt,30pt);
	\draw[very thick, brown, densely dotted] (30pt,30pt) -- (30pt,80pt);
	
	\node[anchor=center] at (30pt,23pt) {\normalsize $1$};
	\node[anchor=center] at (99pt,40pt) {\normalsize $100$};
	\node[anchor=center] at (65pt,73pt) {\normalsize $1$};
	\node[anchor=center] at (12pt,59pt) {\normalsize $1$};
	\node[anchor=center] at (72pt,10pt) {\normalsize $1$};
	
	\draw[black, fill=white] (0pt,30pt) circle (1.7ex);
	\draw[black, fill=white] (60pt,30pt) circle (1.7ex);
	\draw[black, fill=white] (30pt,80pt) circle (1.7ex);
	\draw[black, fill=white] (90pt,80pt) circle (1.7ex);
	\draw[black, fill=white] (90pt,0pt) circle (1.7ex);
	
	\node[anchor=center] at (60pt,30pt) {\normalsize $w$};
	\node[anchor=center] at (30pt,80pt) {\normalsize $u$};
	\node[anchor=center] at (90pt,80pt) {\normalsize $v$};
	
	\draw [fill=white] (4pt,28pt) rectangle (10pt,19pt);
	\node[anchor=center] at (7pt,23.5pt) {\small $1$};
	\draw [fill=white] (64pt,28pt) rectangle (70pt,19pt);
	\node[anchor=center] at (67pt,23.5pt) {\small $0$};
	\draw [fill=white] (34pt,78pt) rectangle (40pt,69pt);
	\node[anchor=center] at (37pt,73.5pt) {\small $0$};
	\draw [fill=white] (94pt,78pt) rectangle (100pt,69pt);
	\node[anchor=center] at (97pt,73.5pt) {\small $0$};
	\draw [fill=white] (90.2pt,-2.5pt) rectangle (104pt,-10.5pt);
	\node[anchor=center] at (97pt,-6.5pt) {\footnotesize $100$};
	
\end{tikzpicture}
	\caption{A bank $v$ increasing its own external assets}
	\label{fig:inject}
\endminipage\hfill
\hspace{0.09\textwidth}
\minipage{0.4\textwidth}	
  \centering

\begin{tikzpicture}
	
	\draw[very thick, blue, arrows=-latex] (30pt,80pt) -- (3pt,35pt);
	\draw[very thick, blue, arrows=-latex] (90pt,80pt) -- (36pt,80pt);
	\draw[very thick, blue, arrows=-latex] (90pt,80pt) -- (116pt,54pt);
	\draw[very thick, blue, arrows=-latex] (60pt,30pt) -- (86pt,4pt);
	\draw[very thick, brown, arrows=-latex] (0pt,30pt) -- (54pt,30pt);
	\draw[very thick, brown, arrows=-latex] (90pt,0pt) -- (90pt,74pt);
	\draw[very thick, brown, densely dotted] (90pt,40pt) -- (60pt,30pt);
	\draw[very thick, brown, densely dotted] (30pt,30pt) -- (30pt,80pt);
	
	\node[anchor=center] at (30pt,23pt) {\normalsize $2$};
	\node[anchor=center] at (96pt,40pt) {\normalsize $\delta$};
	\node[anchor=center] at (65pt,73pt) {\normalsize $1$};
	\node[anchor=center] at (108pt,70pt) {\normalsize $1$};
	\node[anchor=center] at (12pt,59pt) {\normalsize $1$};
	\node[anchor=center] at (72pt,10pt) {\normalsize $1$};
	
	\draw[black, fill=white] (0pt,30pt) circle (1.7ex);
	\draw[black, fill=white] (60pt,30pt) circle (1.7ex);
	\draw[black, fill=white] (30pt,80pt) circle (1.7ex);
	\draw[black, fill=white] (90pt,80pt) circle (1.7ex);
	\draw[black, fill=white] (90pt,0pt) circle (1.7ex);
	\draw[black, fill=white] (120pt,50pt) circle (1.7ex);
	
	\node[anchor=center] at (60pt,30pt) {\normalsize $w$};
	\node[anchor=center] at (30pt,80pt) {\normalsize $u$};
	\node[anchor=center] at (90pt,80pt) {\normalsize $v$};
	
	\draw [fill=white] (4pt,28pt) rectangle (10pt,19pt);
	\node[anchor=center] at (7pt,23.5pt) {\small $2$};
	\draw [fill=white] (64pt,28pt) rectangle (70pt,19pt);
	\node[anchor=center] at (67pt,23.5pt) {\small $0$};
	\draw [fill=white] (34pt,78pt) rectangle (40pt,69pt);
	\node[anchor=center] at (37pt,73.5pt) {\small $0$};
	\draw [fill=white] (124pt,48pt) rectangle (130pt,39pt);
	\node[anchor=center] at (127pt,43.5pt) {\small $0$};
	\draw [fill=white] (94pt,78pt) rectangle (100pt,69pt);
	\node[anchor=center] at (97pt,73.5pt) {\small $1$};
	\draw [fill=white] (94pt,-2pt) rectangle (100pt,-11pt);
	\node[anchor=center] at (97pt,-6.5pt) {\footnotesize $\delta$};
	
\end{tikzpicture}
	\caption{Readjusting the priority of outgoing contracts}
	\label{fig:reprio}
\endminipage\hfill
\hspace{0.01\textwidth}
\end{figure}

\subsection{Readjusting priorities} \label{sec:reprio}

Assuming payments with priorities as discussed in Section \ref{sec:prio}, it is also interesting to know if a node can improve its situation by readjusting the priorities of its outgoing edges. That is, in a more flexible regulation framework, banks may be allowed to choose to some extent the order in which they fulfill their payment obligations. 
However, we show that similarly to the previous case, readjusting the priorities of outgoing edges cannot introduce a better solution.

\begin{theorem}
Assume that every solution of system $S$ provides a payoff of at most $q_v$ for $v$. Then redefining $v$'s outgoing priorities cannot introduce a new solution $r'$ with $q_v' > q_v$.
\end{theorem}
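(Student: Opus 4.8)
The plan is to reuse the same logical skeleton as the proof for investing external assets: assume a strictly better new solution exists, then show that this very vector was already a solution of the original system, contradicting the assumed maximality of $q_v$. The crucial structural observation is that the priorities of a bank's outgoing edges only ever influence how it \emph{distributes} its assets when it is in default; a bank with strictly positive payoff pays every creditor in full and is therefore wholly insensitive to its own priority assignment.

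First I would assume, for contradiction, that redefining $v$'s outgoing priorities introduces a solution $r'$ of the modified system $S'$ with $q_v' > q_v$. Since payoffs are always nonnegative, $q_v \geq 0$, so $q_v' > 0$; by the definition $q_v' = \max(a_v'-l_v', 0)$ this forces $a_v'(r') > l_v'(r')$, and hence $r_v' = 1$. In other words, $v$ is not in default in $r'$.

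The key step is then to observe that when $v$ is not in default, the priority ordering of its outgoing edges is irrelevant to its payments: a bank with $r_v = 1$ satisfies $a_v \geq l_v^{(\leq P)}$, so by the payment rule of Section \ref{sec:prio} it falls into the last case on every priority level, giving $p_{v,u}(r') = l_{v,u}(r')$ for every creditor $u$, no matter how the priorities were assigned. Consequently the outgoing payments of $v$ under $r'$ are exactly the same in the modified system $S'$ as they would be in the original system $S$. For every \emph{other} bank the priorities are unchanged between $S$ and $S'$, so evaluated at the same vector $r'$ their liabilities and payments are computed identically in both systems. This means that $r'$, together with the same payments on each edge, also satisfies the fixed-point condition defining a solution in $S$. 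Hence $r'$ is a solution of the original system $S$ with payoff $q_v' > q_v$, contradicting the assumption that $q_v$ was the maximal payoff achievable in $S$.

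I expect the only real subtlety to be in making precise that \emph{all} edges, not merely $v$'s outgoing ones, carry identical payments under $r'$ in the two systems. This reduces to two facts that must be stated carefully: priorities only affect the split of a defaulting bank's assets among its creditors (they never change total liabilities $l_u(r)$ or recovery rates $r_u = \min(a_u/l_u, 1)$), and the only edges whose priorities were modified are $v$'s outgoing edges, which are immaterial precisely because $v$ is solvent in $r'$. Once these two points are spelled out, the equality of payments across $S$ and $S'$ at the vector $r'$ is immediate, and the contradiction follows at once.
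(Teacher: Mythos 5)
Your proof is correct and follows essentially the same argument as the paper's: derive $q_v' > 0$, hence $a_v' > l_v'$, observe that a solvent bank pays all creditors in full so its outgoing priorities are immaterial, and conclude that $r'$ was already a solution of $S$, contradicting maximality of $q_v$. The additional care you take in noting that all other banks' edges are untouched is exactly the (implicit) justification the paper relies on, so there is no substantive difference.
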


\begin{proof}
Assume that such a new solution $r'$ is introduced. Payoff is nonnegative, so $q_v \geq 0$, and thus $q_v' > 0$. This implies that $a_v' > l_v'$ in $r'$, i.e. $v$ is able to pay all of its liabilities in every outgoing contract. However, in this case, the priorities on the outgoing edges do not matter; hence $r'$ is a solution of $S'$ regardless of how the priorities of outgoing contracts are chosen. In particular, $r'$ is already a solution of the initial system $S$ before the priorities were reorganized, giving the same payoff $q_v'$ in $S$. This contradicts the fact that $q_v$ was the maximal payoff for $v$ in $S$.
\end{proof}

However, it is again possible that $v$ can increase its recovery rate by readjusting priorities. Recall that in the previous case of increasing the bank's own external assets, we did not explore this possibility, since it required the bank $v$ to invest extra funds while not yielding (the same amount of) extra payoff. However, readjusting priorities is an action that $v$ might be able to execute free of charge. Thus if we define the recovery rate as the secondary objective function of a bank (i.e. even if $v$ is in default and thus has 0 payoff, it is not oblivious to the outcome, and prefers a higher recovery rate), then redefining priorities may allow $v$ to achieve a more preferred outcome without having to invest any extra funds.

\begin{theorem} \label{th:prio_rec}
Redefining $v$'s outgoing priorities can increase the recovery rate of $v$.
\end{theorem}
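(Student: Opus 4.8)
The plan is to exhibit a single financial system, shown in Figure \ref{fig:reprio}, in which the bank $v$ has two outgoing debt contracts of weight $1$ (one to $u$, one to a terminal sink node, which I will call $t$, that has no further obligations) and external assets $e_v=1$. Since $v$'s total liability is fixed at $l_v=2$ irrespective of priorities, the only channel through which priorities can affect $r_v$ is $v$'s incoming money, which arrives solely from the CDS on $b\to v$ of weight $\delta$ in reference to $w$ (here $b$ denotes the unlabeled debtor of that CDS, and $a$ the unlabeled debtor of the CDS $a\to w$ in reference to $u$). The feature I would exploit is the chain of two CDSs that routes $v$'s own outgoing payment back to its assets: the amount $v$ pays to $u$ fixes $r_u$; the value $r_u$ drives the liability $2\cdot(1-r_u)$ of the CDS $a\to w$, hence the assets of $w$ and thus $r_w$; and $r_w$ in turn drives the liability $\delta\cdot(1-r_w)$ of the CDS $b\to v$, hence $v$'s incoming payment. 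The crucial point is that this chain reverses sign twice, so paying $u$ \emph{more} ultimately sends \emph{more} money back to $v$.

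The main computation is to solve the system under the two orderings of $v$'s outgoing edges. First I would place $v\to u$ at the higher priority: then $v$ pays its full unit to $u$, giving $r_u=1$, which zeroes the liability on $a\to w$, so $w$ receives nothing and $r_w=0$; the CDS $b\to v$ then pays its full $\delta$, yielding $a_v=1+\delta$ and $r_v=\frac{1+\delta}{2}$. Next I would instead place $v\to t$ at the higher priority, so $v$ spends its first unit on the sink $t$ and passes to $u$ only the remainder; solving the resulting fixed point gives payment $0$ to $u$, hence $r_u=0$, assets $a_w=2$, $r_w=1$, a CDS payment of $\delta\cdot(1-r_w)=0$ into $v$, and thus $a_v=1$, $r_v=\frac{1}{2}$. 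Comparing the two orderings, $\frac{1+\delta}{2}>\frac{1}{2}$ for every $\delta>0$, so prioritizing $v\to u$ strictly raises $v$'s recovery rate, which is exactly the claim.

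Two points need care. First, I must take $\delta<1$ so that $a_v<l_v=2$ under both orderings, i.e. $v$ stays in default; this keeps the recovery rate the operative quantity and is consistent with the preceding theorem, which forbids reordering from increasing $v$'s payoff (which remains $0$ throughout). Second, and this is the main obstacle, I should verify that each priority scheme admits a \emph{unique} solution, since the double-CDS feedback loop could a priori create additional fixed points. I expect to dispatch this by noting that the auxiliary nodes $a$, $b$, $t$ are solvent in every case, because their external assets cover their liabilities, so they add no multiplicity; and that the only alternative branch in the second ordering (where $w$ is itself in default, forcing a positive payment $x$ on the edge $v\to u$) would require $x=\frac{\delta}{2\delta-1}$, which for $\delta<1$ either is negative or exceeds the contract's weight $1$, and is therefore infeasible. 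This pins down the stated recovery rates as the genuine solutions under the two priority assignments.
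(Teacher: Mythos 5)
Your proof is correct, and it is built on the same construction and the same mechanism as the paper's: the system of Figure \ref{fig:reprio}, where the two chained CDSs reverse sign twice, so that paying $u$ in full ($r_u=1$, hence $r_w=0$) routes the full CDS payment $\delta$ back into $v$'s assets and yields $r_v=\frac{1+\delta}{2}$. The one genuine difference is the baseline you compare against. The paper's ``before'' configuration puts all of $v$'s outgoing contracts in the same priority class, i.e.\ proportional payments; it then solves the proportional fixed-point equation $r_v=\frac{\delta\cdot(2r_v-1)+1}{2}$ to get $r_v=\frac{1}{2}$ (with $\delta=\frac{1}{2}$), and shows that promoting the debt to $u$ lifts this to $\frac{3}{4}$. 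You instead compare two strict orderings of $v$'s outgoing debts (sink first versus $u$ first), so your ``before'' case is settled by the infeasibility of the branch $x=\frac{\delta}{2\delta-1}$ (negative for $\delta<\frac{1}{2}$, greater than $1$ for $\frac{1}{2}<\delta<1$, and with no solution at all at $\delta=\frac{1}{2}$) rather than by the proportional computation; both analyses are sound and of comparable length, and your uniqueness discussion for both orderings is adequate. The paper's choice has the small advantage of exhibiting an improvement over the model's default proportional behavior, which is arguably the more natural reading of ``redefining'' priorities, while your version keeps $\delta\in(0,1)$ general instead of fixing $\delta=\frac{1}{2}$; these are presentational rather than substantive differences, and either argument proves the theorem.
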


\begin{proof}
Consider the system in Figure \ref{fig:reprio} with a choice of $\delta=\frac{1}{2}$. Originally, each contract is in the same (lower) priority class. Bank $v$ never has enough assets to pay its liabilities, hence $u$ is also in default. In this case, we have $r_u=r_v$ and $r_w=2-2 \cdot r_u$, so $v$ receives $\delta \cdot (1-r_w)=\delta \cdot (2 \cdot r_v-1)$ funds from the CDS. This means that 
\[ r_v=\frac{\delta \cdot (2 \cdot r_v-1)+1}{2}, \]
which, after reorganization, gives $\delta-1 = 2 \cdot (\delta-1) \cdot r_v$, and thus $r_v=\frac{1}{2}$. This is the only solution of the system if $\delta \neq 1$.

Now assume that $v$ is able to raise the debt towards $u$ to the higher priority level. In this new system, $v$ first fulfills its payment obligation to $u$, which is always possible from its external assets. Hence $r_u=1$ in this case, implying $r_w=0$ and thus a payment of $\frac{1}{2}$ to $v$ in the CDS. This implies $r_v=\frac{3}{4}$ in the only solution of the new system.
\end{proof}

Finally, we show that redefining priorities can allow $v$ to remove an unfavorable solution, and thus increase its worst-case or expected payoff as in the previous subsection.

\begin{theorem}
Given a financial system $S$ with two solutions, redefining $v$'s outgoing priorities can remove the solution which is unfavorable to $v$.
\end{theorem}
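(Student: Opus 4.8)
The plan is to mirror the construction used in the external-assets version of this statement: I would exhibit one explicit financial system $S$ that, under a single uniform priority class, has exactly two solutions --- a \emph{favorable} solution $r_1$ in which $v$ is solvent and enjoys $q_v>0$, and an \emph{unfavorable} solution $r_2$ in which $v$ is in default and hence $q_v=0$ --- and then show that moving one outgoing edge of $v$ to a higher priority class destroys $r_2$ while leaving $r_1$ intact. The survival of $r_1$ is immediate from the observation already exploited in the preceding priority results: since $v$ is solvent in $r_1$, it pays every outgoing contract in full, so the order of its payments is irrelevant and $r_1$ remains a solution after any reprioritization. Hence the whole difficulty is concentrated in (a) building a system with exactly the two desired solutions, and (b) checking that the reprioritization is precisely what breaks the bad one.

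For the construction I would reuse the feedback loop of Figures~\ref{fig:inject} and~\ref{fig:reprio}: a CDS into $v$ written on a reference entity $w$ whose recovery rate is governed, through an intermediate debtor $u$, by how much $v$ actually pays $u$. Concretely, $v$ has external assets $e_v$ and two outgoing debts, one to $u$ and one to an auxiliary sink; an always-solvent node holds a CDS on $u$ that feeds $w$, and an always-solvent node holds a CDS on $w$ that feeds $v$. The weights on the $u$-referencing CDS and on $w$'s debt are chosen so that $w$'s recovery rate \emph{saturates}: when $v$ pays $u$ little, $r_u$ is small, $w$ is fully paid, $r_w=1$, and the CDS into $v$ contributes nothing, so that $a_v=e_v$; when $v$ pays $u$ in full, $r_u=1$, $w$ receives nothing, $r_w=0$, and the CDS pays $v$ its full notional. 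I would then verify the two fixed points directly: in $r_2$, $v$ splits its assets proportionally, underpays $u$, the CDS stays switched off, and the resulting $a_v=e_v<l_v$ makes $v$'s default self-consistent with $q_v=0$; in $r_1$, $v$ is solvent, pays $u$ in full, the CDS switches on, and the injected funds are exactly enough to keep $a_v\geq l_v$.

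To remove $r_2$, I would raise the debt from $v$ to $u$ to the higher priority class, with $e_v$ chosen to be at least the weight of this debt. Then, regardless of the rest of the system, $v$ settles its obligation to $u$ in full out of its external assets before funding the sink; this pins $r_u=1$, forces $r_w=0$, and switches the CDS into $v$ on. Feeding this back, $v$ now receives enough from the CDS to be solvent, so the default configuration $r_2$ is no longer self-consistent and the unique remaining solution is $r_1$.

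The main obstacle is the joint calibration of the weights. The external assets $e_v$ must be \emph{small enough} that, when split proportionally among $v$'s two outgoing debts in $r_2$, the payment reaching $u$ keeps $r_u$ below the saturation threshold (so the CDS genuinely stays off and $v$ stays in default), yet \emph{large enough} to cover the full weight of the $v \to u$ debt once that edge is prioritized. In addition, because the model's update map is continuous, a generic bistable loop tends to produce a third, intermediate fixed point; I would avoid this by placing the default solution exactly at the corner where $w$'s recovery rate leaves saturation (an explicit equality among the constants), so that above this corner the update stays strictly above the diagonal all the way to the solvent state and the system has precisely the two intended solutions. Verifying that no such spurious solution survives --- i.e.\ that the saturating CDS loop meets the diagonal exactly at $r_2$ and at the solvent point $r_1$ --- is the one genuinely delicate point of the argument.
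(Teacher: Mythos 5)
Your proposal takes essentially the same route as the paper's proof: the paper uses the system of Figure~\ref{fig:reprio} with $\delta=100$, which is exactly the saturating CDS loop you describe ($e_v=1$, two unit debts out of $v$, a weight-2 CDS referencing $u$ so that the unfavorable solution $r_v=\frac{1}{2}$ sits precisely at the saturation corner of $r_w$), and it removes that solution by raising the $v \to u$ debt to the higher priority, which pins $r_u=1$, forces $r_w=0$, and makes $v$ solvent in the unique remaining solution. Your calibration concerns, including the risk of a third intermediate fixed point, are exactly what the paper's choice of weights resolves, so the proposal is correct and matches the paper's argument.
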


\begin{proof}
Consider the system in Figure \ref{fig:reprio} with a choice of $\delta=100$. As discussed in the proof of Theorem \ref{th:prio_rec}, if $r_v<1$, then the only solution is $r_v=\frac{1}{2}$. However, the large $\delta$ value now allows another solution in the original system: if $r_v=1$, then $r_u=1$ and $r_w=0$, ensuring that $v$ indeed has enough funds to pay its liabilities. The two solutions come with payoffs of $q_v=0$ and $q_v=98$, respectively.

Now if $v$ raises its debt towards $u$ to the higher priority level, then $r_u=1$ is always guaranteed, so $r_w=0$ and thus $v$ indeed has a payoff of $98$ in the only solution.
\end{proof}

\section{Game-theoretic dilemmas in financial systems}

Finally, we briefly show that the attempts of banks to influence the system can also easily lead to situations that can be described by classical game-theoretic settings.

\subsection{Prisoner's dilemma}

We first show an example where if two nodes simultaneously try to influence the system to their advantage, then the resulting situation is essentially identical to the well-known prisoner's dilemma \cite{generalGT}.

Consider the financial system in Figure \ref{fig:prisoners}, where banks $v_1$ and $v_2$ want to influence the system to achieve a better outcome. Assume that in the current legal framework, the only step available to these banks is to completely remove their incoming debt contract from $u$ (as in Theorem \ref{th:removedebt}); both banks can decide whether to execute this step or not. Note that canceling a debt increases the recovery rate of $u$, which indirectly implies a larger payment on the CDS for both $v_1$ and $v_2$, and thus can be beneficial for both banks. Applying prisoner's dilemma terminology, we also refer to the step of canceling the debt as \textit{cooperation}, and the step of not canceling the debt as \textit{defection}.

Now let us analyze the payoff of $v_1$ and $v_2$ in each strategy profile. Note that $r_w=1-r_u$, so the payment on the CDSs for both $v_1$ and $v_2$ is $3 \cdot (1-r_w) = 3 \cdot r_u$ in any case.

If both of the nodes cooperate (i.e. both debts are removed), then $u$ can pay its remaining liabilities, thus $r_u=1$. This implies a payment of $3$ on the CDS, which is the only asset of the acting nodes in this case; hence $q_{v_1}=q_{v_2}=3$.

If both of the nodes defect (no debt is removed), then we only have $r_u=\frac{1}{3}$, resulting in a payment of $1$ from the CDS. However, in this case, both $v_1$ and $v_2$ also get a direct payment of $5 \cdot r_u = \frac{5}{3}$ from the defaulting $u$, which adds up to a total payoff of $\frac{8}{3}=2.\dot{6}$.

Finally, assume that only one of the nodes cooperate (say, $v_1$). With only one of the outgoing debts removed, $u$ will have a recovery rate of $r_u=\frac{1}{2}$. This results in a payment of $\frac{3}{2}$ on the CDS for both nodes. However, note that $v_2$ still has an incoming debt contract from $u$, and receives a payment of $5 \cdot r_u = \frac{5}{2}$ on this contract. This implies $q_{v_1}=\frac{3}{2}=1.5$, while $q_{v_2}=4$ for the strategy profile. The symmetric case yields $q_{v_1}=4$ and $q_{v_2}=1.5$.

Since the four payoffs are ordered exactly as in case of a prisoner's dilemma, we obtain an essentially equivalent situation if the two banks are not allowed to coordinate. For both players, defection is always a dominant strategy. E.g. for $v_2$, defection yields a payoff of $4$ (instead of only $3$) if $v_1$ cooperates, and it yields a payoff of $2.\dot{6}$ (instead of only $1.5$) if $v_1$ defects. Thus the Nash-Equilibrium of the game is obtained when both players defect, with $q_{v_1}=q_{v_2}=2.\dot{6}$. However, both players would be better off in the social optimum of mutual cooperation, which gives $q_{v_1}=q_{v_2}=3$.

Some other well-known two-player games, e.g. the chicken or stag hunt game \cite{generalGT} can also easily occur in financial networks in a similar fashion. We outline some example financial systems that correspond to these further games in Appendix \ref{App:games}.

\begin{figure}
\centering
\captionsetup{justification=centering}
\hspace{0.02\textwidth}
\minipage{0.4\textwidth}
	\centering

\begin{tikzpicture}
	
	\draw[very thick, blue, arrows=-latex] (30pt,80pt) -- (3pt,35pt);
	\draw[very thick, blue, arrows=-latex] (30pt,80pt) -- (84pt,62pt);
	\draw[very thick, blue, arrows=-latex] (30pt,80pt) -- (84pt,98pt);
	\draw[very thick, blue, arrows=-latex] (60pt,30pt) -- (86pt,4pt);
	\draw[very thick, brown, arrows=-latex] (0pt,30pt) -- (54pt,30pt);
	\draw[very thick, brown, arrows=-latex] (90pt,0pt) -- (90pt,54pt);
	\draw[very thick, brown, arrows=-latex] (90pt,0pt) -- (105pt,15pt) -- (105pt,60pt) -- (90pt, 75pt) -- (90pt, 94pt);
	\draw[very thick, brown, densely dotted] (90pt,29pt) -- (60pt,29pt);
	\draw[very thick, brown, densely dotted] (105pt,33pt) -- (60pt,33pt);
	\draw[very thick, brown, densely dotted] (30pt,30pt) -- (30pt,80pt);
	
	\node[anchor=center] at (30pt,23pt) {\normalsize $1$};
	\node[anchor=center] at (95pt,27pt) {\normalsize $3$};
	\node[anchor=center] at (110pt,33pt) {\normalsize $3$};
	\node[anchor=center] at (65pt,74pt) {\normalsize $5$};
	\node[anchor=center] at (65pt,98pt) {\normalsize $5$};
	\node[anchor=center] at (11.5pt,59pt) {\normalsize $5$};
	\node[anchor=center] at (72pt,10pt) {\normalsize $1$};
	
	\draw[black, fill=white] (0pt,30pt) circle (1.7ex);
	\draw[black, fill=white] (60pt,30pt) circle (1.7ex);
	\draw[black, fill=white] (30pt,80pt) circle (1.7ex);
	\draw[black, fill=white] (90pt,60pt) circle (1.7ex);
	\draw[black, fill=white] (90pt,100pt) circle (1.7ex);
	\draw[black, fill=white] (90pt,0pt) circle (1.7ex);
	
	\node[anchor=center] at (60pt,30pt) {\normalsize $w$};
	\node[anchor=center] at (30pt,80pt) {\normalsize $u$};
	\node[anchor=center] at (90.5pt,59.5pt) {\normalsize $v_2$};
	\node[anchor=center] at (90.5pt,99.5pt) {\normalsize $v_1$};
	
	\draw [fill=white] (4pt,28pt) rectangle (10pt,19pt);
	\node[anchor=center] at (7pt,23.5pt) {\small $1$};
	\draw [fill=white] (64pt,28pt) rectangle (70pt,19pt);
	\node[anchor=center] at (67pt,23.5pt) {\small $0$};
	\draw [fill=white] (34pt,78pt) rectangle (40pt,69pt);
	\node[anchor=center] at (37pt,73.5pt) {\small $5$};
	\draw [fill=white] (94.5pt,57.5pt) rectangle (100.5pt,48.5pt);
	\node[anchor=center] at (97.5pt,53pt) {\small $0$};
	\draw [fill=white] (94.5pt,97.5pt) rectangle (100.5pt,88.5pt);
	\node[anchor=center] at (97.5pt,93pt) {\small $0$};
	\draw [fill=white] (94pt,-2pt) rectangle (100pt,-11pt);
	\node[anchor=center] at (97pt,-6.5pt) {\small $6$};
	
\end{tikzpicture}
	\vspace{-5pt}
	\caption{Prisoner's dilemma in financial systems}
	\label{fig:prisoners}
\endminipage\hfill
\hspace{0.09\textwidth}
\minipage{0.4\textwidth}	
  \centering
	\vspace{19pt}

\begin{tikzpicture}
	
	\draw[very thick, blue, arrows=-latex] (40pt,15pt) -- (75.5pt,3pt);
	\draw[very thick, blue, arrows=-latex] (40pt,-15pt) -- (75.5pt,-3pt);
	\draw[very thick, brown, arrows=-latex] (0pt,0pt) -- (35pt,14pt);
	\draw[very thick, brown, arrows=-latex] (0pt,0pt) -- (35pt,-14pt);
	\draw[very thick, brown, arrows=-latex] (0pt,0pt) -- (0pt,35pt) -- (75pt,35pt);
	\draw[very thick, brown, arrows=-latex] (0pt,0pt) -- (0pt,-35pt) -- (75pt,-35pt);
	\draw[very thick, brown, densely dotted] (40pt,15pt) -- (20pt,-7.5pt);
	\draw[very thick, brown, densely dotted] (40pt,-15pt) -- (20pt,7.5pt);
	\draw[very thick, brown, densely dotted] (40pt,15pt) -- (40pt,35pt);
	\draw[very thick, brown, densely dotted] (40pt,-15pt) -- (40pt,-35pt);
	
	\node[anchor=center] at (17pt,12pt) {\normalsize $1$};
	\node[anchor=center] at (17pt,-12pt) {\normalsize $1$};
	\node[anchor=center] at (67pt,12pt) {\normalsize $1$};
	\node[anchor=center] at (67pt,-12pt) {\normalsize $1$};
	
	\node[anchor=center] at (25pt,-29pt) {\normalsize $\delta$};
	\node[anchor=center] at (25pt,29pt) {\normalsize $\delta$};
	
	\draw[black, fill=white] (0pt,0pt) circle (1.7ex);
	\draw[black, fill=white] (40pt,-15pt) circle (1.7ex);
	\draw[black, fill=white] (40pt,15pt) circle (1.7ex);
	\draw[black, fill=white] (80pt,0pt) circle (1.7ex);
	\draw[black, fill=white] (80pt,-35pt) circle (1.7ex);
	\draw[black, fill=white] (80pt,35pt) circle (1.7ex);
	
	\node[anchor=center] at (40pt,-15pt) {\normalsize $v$};
	\node[anchor=center] at (40pt,15pt) {\normalsize $u$};
	\node[anchor=center] at (80.5pt,35.5pt) {\normalsize $u'$};
	\node[anchor=center] at (80.5pt,-34.5pt) {\normalsize $v'$};
	
	\draw [fill=white] (-3pt,-4pt) rectangle (14pt,-11pt);
	\node[anchor=center] at (5.5pt,-7.5pt) {\scriptsize $2\!\!+\!\!2 \delta$};
	\draw [fill=white] (44pt,13pt) rectangle (50pt,4pt);
	\node[anchor=center] at (47pt,8.5pt) {\small $0$};
	\draw [fill=white] (44pt,-17pt) rectangle (50pt,-26pt);
	\node[anchor=center] at (47pt,-21.5pt) {\small $0$};
	\draw [fill=white] (84pt,-2pt) rectangle (90pt,-11pt);
	\node[anchor=center] at (87pt,-6.5pt) {\small $0$};
	\draw [fill=white] (84pt,33pt) rectangle (90pt,24pt);
	\node[anchor=center] at (87pt,28.5pt) {\small $0$};
	\draw [fill=white] (84pt,-37pt) rectangle (90pt,-46pt);
	\node[anchor=center] at (87pt,-41.5pt) {\small $0$};
	
\end{tikzpicture}
	\vspace{7pt}
	\caption{Dollar auction game in financial systems}
	\label{fig:auction}
\endminipage\hfill
\hspace{0.01\textwidth}
\end{figure}

\subsection{Dollar auction}

We also show an example of the dollar auction game \cite{dollar} in financial systems. Consider the system in Figure \ref{fig:auction}, and assume that banks $u'$ and $v'$ want to influence this system by donating extra funds to banks $u$ or $v$ (as in Theorem \ref{th:injectu}). Note that the payoff of $u'$ and $v'$ depends on the recovery rates of $u$ and $v$, respectively, which in turn have a recovery rate depending on each other. Due to the design of the system, $u'$ prefers bank $u$ to be in default, and thus it wants to increase $e_v$; similarly, $v'$ prefers bank $v$ to be in default, so it wants to increase $e_u$. We assume that 1 unit of money is a very high amount in our context, and thus $u'$ and $v'$ cannot donate enough to ensure that $u$ or $v$ pays its debt entirely from external assets; i.e. we assume that $e_u,e_v<1$ even after the donation of extra funds.

For a convenient analysis, we assume that there is a small minimum amount $\epsilon$ of funds that $u'$ or $v'$ can donate in one step. In our example, we choose a $\delta$ value in the magnitude of this $\epsilon$, e.g. $\delta=6 \epsilon$.

Let us now analyze the recovery rates of $u$ and $v$ in the solutions of the system.

\begin{itemize}
 \item The vector $r_u=r_v=1$ cannot be a solution, since it would imply no payment on the incoming CDSs, and thus these recovery rates would only be possible if $e_u, e_v \geq 1$.
 \item If a vector $r_u=1$, $r_v<1$ is a solution, then since $v$ receives no incoming payments, we must have $r_v=\frac{e_v}{1}=e_v$. Thus bank $u$ has assets of $e_u+1-e_v$, which has to be at least 1 for $r_u=1$ to hold. Hence this is only a solution if $e_u+1-e_v \geq 1$, i.e. $e_u \geq e_v$. In a symmetric manner, $r_v=1$, $r_u=e_u$ is only a solution if $e_v \geq e_u$.
 \item If $r_u<1$, $r_v<1$ in a solution, then $r_u=e_u+1-r_v$ and $r_v=e_v+1-r_u$ must hold. This implies $e_u=e_v$, and $r_u+r_v=1+e_u$. Hence if $e_u=e_v$, then any $r_u, r_v$ with $r_u+r_v=1+e_u$ provides a solution.
\end{itemize}

\noindent Thus as long as $e_u,e_v<1$, the behavior of the system is as follows:

\begin{itemize}
 \item If $e_u < e_v$, then the only solution is $r_u=e_u$, $r_v=1$. This means $q_{u'}=\delta \cdot (1-e_u)$ and $q_{v'}=0$.
 \item If $e_u > e_v$, then the only solution is $r_u=1$, $r_v=e_v$. This implies $q_{u'}=0$ and $q_{v'}=\delta \cdot (1-e_v)$.
 \item If $e_u = e_v$, then any $r_u,r_v \leq 1$ with $r_u+r_v=1+e_u$ is a solution of the system. In the general case, $q_{u'}=\delta \cdot (1-r_u)$ and $q_{v'}=\delta \cdot (1-r_v)$.
\end{itemize}

This describes a setting that is very similar to a dollar auction. In the beginning, with $e_u = e_v = 0$, we have a range of different solutions, and a choice among these depends on a financial authority. One of the banks (say, bank $u'$) decides to donate a small $\epsilon$ amount of funds to $v$; then with $e_v=\epsilon > e_u=0$, bank $u'$ receives a payment of $\delta \cdot (1-0)$ in the only resulting solution. At this point, the payoff of $v'$ is 0; however, at the cost of donating $2 \cdot \epsilon$ funds to $u$, it could achieve $e_u=2 \epsilon > e_v=\epsilon$, thus resulting in a single solution with a payoff of $q_{v'}=\delta \cdot (1-\epsilon)$. Since this increases the payoff of $v'$ by $\delta \cdot (1-\epsilon)$ at the cost of only $2\epsilon$, this is indeed a rational step for the appropriate $\delta$ and $\epsilon$ values. However, then $u'$ is again motivated to donate $2 \epsilon$ more funds to increase $e_v$ over $e_u$ again, and so on.

Assuming that both $u'$ and $v'$ has at most $\frac{1}{2}$ funds to donate, we always have $e_u,e_v \in [0, \frac{1}{2}]$. This shows that e.g. if we have $e_u > e_v$, then the payoff of bank $v'$ is always within
\[ q_{v'}=\delta \cdot (1-e_v) \; \in \; [\delta / 2\,, \, \delta]=[3 \epsilon\,,\, 6 \epsilon]. \]
Hence in every step, it is indeed rational for $v'$ to donate another $2 \epsilon$ funds, since it increases its payoff from 0 to at least $3 \epsilon$. After a couple of rounds, $u'$ and $v'$ will have both donated significantly more money than their payoff of at most $6 \epsilon$. However, the banks are still always tempted to execute the next donation step to mitigate their losses.

\newpage

\bibliography{references}

\begin{appendices}

\section{Further two-player games in financial systems} \label{App:games}

We now show some further examples of financial system that represent other well-known two-player-two-strategy games, similarly to the case of the prisoner's dilemma.

\subsection{Stag Hunt}

We first analyze the financial system in Figure \ref{fig:stag}, which represents the coordination game known as stag hunt \cite{generalGT}. We again assume that the two acting nodes $v_1$ and $v_2$ can only execute the action of completely removing their incoming debt contract from $u_1$ and $u_2$, respectively. As before, we refer to the decisions of removing and not removing the debt as cooperation and defection, respectively.

Recall that canceling an incoming debt and donating funds to another bank are very similar operations in some sense. With a slight modification to our system, we could also present the same example game in a setting where the acting banks must decide to donate or not donate a specific amount of funds to a bank. For our example systems, we select the action that allows a simpler presentation.

Let us analyze the payoffs in the different strategy profiles. If both players cooperate, then both $u_1$ and $u_2$ will only have a liability of 2, which implies $r_{u_1}=r_{u_2}=1$. In this case, $w$ receives no payment from either of the CDSs, resulting in $r_w=0$. This means that both $v_1$ and $v_2$ get a payment of $3$ from their incoming CDSs. With their debt contracts canceled, we get $q_{v_1}=q_{v_2}=3$.

If both players defect and keep their debt contract, then both $u_1$ and $u_2$ will have a recovery rate of only $\frac{1}{2}$. This implies a payment of $1$ to $w$ on both CDSs, so $w$ avoids default with $r_w=1$. This means that the acting nodes will not receive any payment on the CDS. On their debt contracts, they both receive $\frac{1}{2} \cdot 2$, i.e. $q_{v_1}=q_{v_2}=1$.

Finally, assume that $v_1$ cooperates but $v_2$ defects. In this case, we end up with recovery rates of $r_{u_1}=1$ and $r_{u_2}=\frac{1}{2}$. Thus $w$ only receives payment on the CDS that is in reference to $u_2$. However, this payment of $\frac{1}{2} \cdot 2$ is already enough for $w$ to fulfill its liabilities, and hence $r_w=1$. Again, $v_1$ and $v_2$ do not receive any payment on the CDS. However, $v_2$ still has an incoming debt contract that ensures a payment of $\frac{1}{2} \cdot 2 = 1$, while $v_1$ has no assets at all. Thus the solution provides $q_{v_1}=0$ and $q_{v_2}=1$. In a symmetric manner, the case when $v_2$ cooperates and $v_1$ defects incurs $q_{v_1}=1$, $q_{v_2}=0$.

Thus the system represents a game where the players are incentivized to coordinate their strategies. Both the case when both banks cooperate and when both banks defect is a pure Nash-Equilibrium, with mutual cooperation being the social optimum. However, if a bank is unsure whether the other bank will cooperate, it might be motivated to defect in order to avoid the risk of getting no payoff at all.

\subsection{Chicken game}

Finally, we also provide an example of the chicken game (also known as the hawk-dove game \cite{generalGT}) when the pure Nash-Equilibria are obtained in the asymmetric strategy profiles.

Consider the financial system in Figure \ref{fig:chicken}, and assume the acting banks $v_1$ and $v_2$ now have the options to either donate $1$ unit or money to another bank, or do not donate money at all. Due to the structure of the network, the nodes are motivated to donate this $1$ unit of money to $u$, since this results in a payment on their incoming CDS contract. We again refer to donating a unit of money to $u$ as cooperation, and not donating as defection.

If both nodes defect, then $u$ still has no assets at all, implying $r_u=0$. This results in $r_w=1$, and hence the acting nodes receive no incoming payment, so $q_{v_1}=q_{v_2}=0$.

If both nodes cooperate, then $u$ has more than enough assets to pay its liabilities, resulting in $r_u=1$ and $r_w=0$. This means that both nodes get a payment of $3$ in the CDS. After subtracting the amount they have donated, we get $q_{v_1}=q_{v_2}=2$.

However, to ensure that $u$ does not go into default, it is enough if only one of the two nodes make a donation. I.e. if $v_1$ cooperates but $v_2$ defects, then $u$ still has $1$ asset, which already implies $r_u=1$, $r_w=0$ and a payment of $3$ to both $v_1$ and $v_2$ on their incoming CDS. After subtracting the donated funds, this gives $q_{v_1}=2$ and $q_{v_2}=3$. Similarly, if $v_2$ cooperates and $v_1$ defects, we obtain $q_{v_1}=3$, $q_{v_2}=2$.

The payoffs show that there is no dominant strategy in the game: if $v_1$ cooperates, then the best response of $v_2$ is to defect, while if $v_1$ defects, then the best response of $v_2$ is to cooperate. This implies that the two pure Nash-Equilibria are obtained in the strategy profiles when the banks choose the opposite strategies. 

Note that we can easily generalize this setting to the case of more than 2 acting nodes, resulting in the so-called volunteer's dilemma. For any $k$, we can add distinct banks $v_1,v_2, ...,v_k$ that are all connected to the financial network in the same way (through an incoming CDS of weight $3$ in reference to $w$), and all have the same two options of either donating $1$ unit of money to $u$ or not acting at all. Note that we also have to ensure that the (currently unlabeled) debtor of the CDSs to these acting nodes has enough resources to make payments on these CDSs in any case, i.e. it must have external assets of at least $3 \cdot k$.

In this case, we obtain a game where again only one volunteer bank $v_i$ is required to make a donation to $u$, and this already ensures a payoff of $3$ for every other bank (and a payoff of $2$ for $v_i$). In this game, the pure Nash-Equilibria are the strategy profiles where exactly one bank cooperates, and the remaining banks all defect.

\begin{figure}
\centering
\captionsetup{justification=centering}
\hspace{0.02\textwidth}
\minipage{0.4\textwidth}
	\centering

\begin{tikzpicture}
	
	\draw[very thick, blue, arrows=-latex] (25pt,100pt) -- (2pt,35.5pt);
	\draw[very thick, blue, arrows=-latex] (35pt,60pt) -- (4pt,34pt);
	\draw[very thick, blue, arrows=-latex] (35pt,60pt) -- (84pt,60pt);
	\draw[very thick, blue, arrows=-latex] (25pt,100pt) -- (84pt,100pt);
	\draw[very thick, blue, arrows=-latex] (60pt,30pt) -- (86pt,4pt);
	\draw[very thick, brown, arrows=-latex] (0pt,28pt) -- (54pt,28pt);
	\draw[very thick, brown, arrows=-latex] (0pt,32pt) -- (54pt,32pt);
	\draw[very thick, brown, arrows=-latex] (90pt,0pt) -- (90pt,54pt);
	\draw[very thick, brown, arrows=-latex] (90pt,0pt) -- (105pt,15pt) -- (105pt,60pt) -- (90pt, 75pt) -- (90pt, 94pt);
	\draw[very thick, brown, densely dotted] (90pt,29pt) -- (60pt,29pt);
	\draw[very thick, brown, densely dotted] (105pt,33pt) -- (60pt,33pt);
	\draw[very thick, brown, densely dotted] (25pt,32pt) -- (25pt,100pt);
	\draw[very thick, brown, densely dotted] (35pt,28pt) -- (35pt,60pt);
	
	\node[anchor=center] at (30pt,21.5pt) {\normalsize $2$};
	\node[anchor=center] at (30pt,38pt) {\normalsize $2$};
	\node[anchor=center] at (95pt,27pt) {\normalsize $3$};
	\node[anchor=center] at (110pt,33pt) {\normalsize $3$};
	\node[anchor=center] at (55pt,106pt) {\normalsize $2$};
	\node[anchor=center] at (62pt,66pt) {\normalsize $2$};
	\node[anchor=center] at (9pt,69pt) {\normalsize $2$};
	\node[anchor=center] at (16pt,51pt) {\normalsize $2$};
	\node[anchor=center] at (72pt,10pt) {\normalsize $1$};
	
	\draw[black, fill=white] (0pt,30pt) circle (1.7ex);
	\draw[black, fill=white] (60pt,30pt) circle (1.7ex);
	\draw[black, fill=white] (25pt,100pt) circle (1.7ex);
	\draw[black, fill=white] (35pt,60pt) circle (1.7ex);
	\draw[black, fill=white] (90pt,60pt) circle (1.7ex);
	\draw[black, fill=white] (90pt,100pt) circle (1.7ex);
	\draw[black, fill=white] (90pt,0pt) circle (1.7ex);
	
	\node[anchor=center] at (60pt,30pt) {\normalsize $w$};
	\node[anchor=center] at (25.5pt,99.5pt) {\normalsize $u_1$};
	\node[anchor=center] at (35.5pt,59.5pt) {\normalsize $u_2$};
	\node[anchor=center] at (90.5pt,59.5pt) {\normalsize $v_2$};
	\node[anchor=center] at (90.5pt,99.5pt) {\normalsize $v_1$};
	
	\draw [fill=white] (4pt,28pt) rectangle (10pt,19pt);
	\node[anchor=center] at (7pt,23.5pt) {\small $4$};
	\draw [fill=white] (64pt,28pt) rectangle (70pt,19pt);
	\node[anchor=center] at (67pt,23.5pt) {\small $0$};
	\draw [fill=white] (29.5pt,97.5pt) rectangle (35.5pt,88.5pt);
	\node[anchor=center] at (32.5pt,93pt) {\small $2$};
	\draw [fill=white] (39.5pt,57.5pt) rectangle (45.5pt,48.5pt);
	\node[anchor=center] at (42.5pt,53pt) {\small $2$};
	\draw [fill=white] (94.5pt,57.5pt) rectangle (100.5pt,48.5pt);
	\node[anchor=center] at (97.5pt,53pt) {\small $0$};
	\draw [fill=white] (94.5pt,97.5pt) rectangle (100.5pt,88.5pt);
	\node[anchor=center] at (97.5pt,93pt) {\small $0$};
	\draw [fill=white] (94pt,-2pt) rectangle (100pt,-11pt);
	\node[anchor=center] at (97pt,-6.5pt) {\small $6$};
	
\end{tikzpicture}
	\vspace{-5pt}
	\caption{Stag hunt game in a financial system}
	\label{fig:stag}
\endminipage\hfill
\hspace{0.09\textwidth}
\minipage{0.4\textwidth}	
  \centering
	\vspace{5pt}

\begin{tikzpicture}
	
	\draw[very thick, blue, arrows=-latex] (30pt,80pt) -- (3pt,35pt);
	\draw[very thick, blue, arrows=-latex] (60pt,30pt) -- (86pt,4pt);
	\draw[very thick, brown, arrows=-latex] (0pt,30pt) -- (54pt,30pt);
	\draw[very thick, brown, arrows=-latex] (90pt,0pt) -- (90pt,54pt);
	\draw[very thick, brown, arrows=-latex] (90pt,0pt) -- (105pt,15pt) -- (105pt,60pt) -- (90pt, 75pt) -- (90pt, 94pt);
	\draw[very thick, brown, densely dotted] (90pt,29pt) -- (60pt,29pt);
	\draw[very thick, brown, densely dotted] (105pt,33pt) -- (60pt,33pt);
	\draw[very thick, brown, densely dotted] (30pt,30pt) -- (30pt,80pt);
	
	\node[anchor=center] at (30pt,23pt) {\normalsize $1$};
	\node[anchor=center] at (95pt,27pt) {\normalsize $3$};
	\node[anchor=center] at (110pt,33pt) {\normalsize $3$};
	\node[anchor=center] at (11.5pt,59pt) {\normalsize $1$};
	\node[anchor=center] at (72pt,10pt) {\normalsize $1$};
	
	\draw[black, fill=white] (0pt,30pt) circle (1.7ex);
	\draw[black, fill=white] (60pt,30pt) circle (1.7ex);
	\draw[black, fill=white] (30pt,80pt) circle (1.7ex);
	\draw[black, fill=white] (90pt,60pt) circle (1.7ex);
	\draw[black, fill=white] (90pt,100pt) circle (1.7ex);
	\draw[black, fill=white] (90pt,0pt) circle (1.7ex);
	
	\node[anchor=center] at (60pt,30pt) {\normalsize $w$};
	\node[anchor=center] at (30pt,80pt) {\normalsize $u$};
	\node[anchor=center] at (90.5pt,59.5pt) {\normalsize $v_2$};
	\node[anchor=center] at (90.5pt,99.5pt) {\normalsize $v_1$};
	
	\draw [fill=white] (4pt,28pt) rectangle (10pt,19pt);
	\node[anchor=center] at (7pt,23.5pt) {\small $1$};
	\draw [fill=white] (64pt,28pt) rectangle (70pt,19pt);
	\node[anchor=center] at (67pt,23.5pt) {\small $0$};
	\draw [fill=white] (34pt,78pt) rectangle (40pt,69pt);
	\node[anchor=center] at (37pt,73.5pt) {\small $0$};
	\draw [fill=white] (94.5pt,57.5pt) rectangle (100.5pt,48.5pt);
	\node[anchor=center] at (97.5pt,53pt) {\small $0$};
	\draw [fill=white] (94.5pt,97.5pt) rectangle (100.5pt,88.5pt);
	\node[anchor=center] at (97.5pt,93pt) {\small $0$};
	\draw [fill=white] (94pt,-2pt) rectangle (100pt,-11pt);
	\node[anchor=center] at (97pt,-6.5pt) {\small $6$};
	
\end{tikzpicture}
	\vspace{-5pt}
	\caption{Chicken game in a financial system}
	\label{fig:chicken}
\endminipage\hfill
\hspace{0.01\textwidth}
\end{figure}

\end{appendices}

\end{document}